\newif\iffinal
\title{
\iffinal 
\else
\LARGE \bf
\fi
Almost Global Asymptotic Trajectory Tracking 
\iffinal \\ \fi
for Fully-Actuated Mechanical Systems on Homogeneous Riemannian Manifolds
}
\author{Jake Welde and Vijay Kumar%
\thanks{
	J. Welde and V. Kumar are with the GRASP Laboratory at the University of Pennsylvania.
	We sincerely thank Dr. Aradhana Nayak and Prof. Ravi N. Banavar, for helpful discussions on their work \cite{Nayak2019} during this work's development; Prof. Matthew D. Kvalheim, for 
	his help identifying a missing assumption in our theorem;
	and Prof. Jean Gallier, for helpful conversations.
	 We gratefully acknowledge the support of Qualcomm Research and the NSF Graduate Research Fellowship Program.
}
}
\newcommand{\pairing}[2]{\big\langle{\hspace{1pt}}#1{\hspace{1.5pt};\hspace{1.5pt}}#2{\hspace{1pt}}\big\rangle}
\DeclareMathOperator{\dist}{dist}
\newcommand{\diff}{\operatorname{d}\hspace{-1pt}}
\DeclareMathOperator{\Diff}{D}
\DeclareMathOperator{\spn}{span}
\DeclareMathOperator{\id}{id}
\DeclareMathOperator{\tr}{tr}
\DeclareMathOperator{\Ad}{Ad}
\DeclareMathOperator{\ad}{ad}
\renewcommand\thmcontinues[1]{{\normalfont continued}}
\declaretheoremstyle[notefont=\normalfont\itshape,bodyfont=\normalfont]{normaltext}
\newtheorem{theorem}{Theorem}
\newtheorem{corollary}{Corollary}
\declaretheorem[name=Definition,style=normaltext]{definition}
\declaretheorem[name=Remark,style=normaltext]{remark}
\declaretheorem[name=Example,style=normaltext]{example}
\newtheorem{proposition}{Proposition}
\newtheorem{lemma}{Lemma}
\let\NAT@parse\undefined
\begin{document}

\maketitle
\thispagestyle{empty}
\pagestyle{empty}

\begin{abstract}
	In this work, we address the design of tracking controllers that drive a mechanical system's state asymptotically towards a reference trajectory. Motivated by aerospace and robotics applications, we consider fully-actuated systems evolving on the broad class of homogeneous spaces (encompassing all vector spaces, Lie groups, and spheres of any finite dimension). In this setting, 	the transitive action of a Lie group on the configuration manifold enables an intrinsic description of the tracking error as an element of the state space, even in the absence of a group structure on the configuration manifold itself (\textit{e.g.}, for $\mathbb{S}^2$). Such an error state facilitates the design of a generalized control policy depending smoothly on state and time, which drives the geometric tracking error to a designated origin from almost every initial condition, thereby guaranteeing almost global convergence to the reference trajectory. Moreover, the proposed controller simplifies elegantly when specialized to a Lie group or the $n$-sphere. In summary, we propose a unified, intrinsic controller guaranteeing almost global asymptotic trajectory tracking for fully-actuated mechanical systems evolving on a broad class of manifolds. We apply the method to an axisymmetric satellite and an omnidirectional aerial robot.
\end{abstract}

\iffinal
	\begin{IEEEkeywords}
	algebraic/geometric methods,
	robotics,
	aerospace
\end{IEEEkeywords}
\fi

\section{Introduction}

\iffinal
\IEEEPARstart{A}{n}
\else
An
\fi
efficient, effective approach to the control of mechanical systems 
is to synthesize an overall controller via the composition of a low-rate ``open-loop'' planner (to select a {reference trajectory}) with a high-rate ``closed-loop'' tracking controller (to drive the system towards the reference trajectory). Such layered control architectures are found across an extremely diverse array of engineered and biological systems, demonstrating the effectiveness of the paradigm \cite{Matni2024}.

For some mechanical systems, it is possible to reduce the \textit{``tracking problem''} to the easier \textit{``regulation problem''} (\textit{i.e.}, asymptotically stabilizing an equilibrium) for the same system, using a state-valued {``tracking error''} between the reference and actual states that ``vanishes'' only when the two are equal (\textit{e.g.}, 
$({q - q_d},{\dot q - \dot q_d})$ for a mechanical system on ${\mathbb{R}^n}$). Then, feedback is designed to drive the error towards the origin, \textit{i.e.}, a \textit{constant} (equilibrium) trajectory.
Such a reduction is substantially easier for fully-actuated (\textit{vs.} underactuated) systems, since arbitrary forces can be exerted, both to compensate for time variation in the reference
and to inject suitable artificial potential and dissipation terms to 
almost globally asymptotically stabilize 
the error 
\cite{Koditschek1989}.

Clearly, a smooth tracking controller suitable for operation on the entire state space (in general, the tangent bundle of a non-Euclidean manifold) \textit{cannot} rely on a local, coordinate-based tracking error.
When the configuration manifold is a Lie group, \cite{Maithripala2006} shows that the group structure furnishes an intrinsic, globally-defined 
tracking error on the state space, via the action of the group on itself.
For this reason, \cite{Maithripala2006}  remarks that \textit{``the tracking problem on a Lie group is more closely related to tracking on $\mathbb{R}^n$ than it is to the general Riemannian case, for which the group operation is lacking.''}

Tracking via error regulation on manifolds that may \textit{not} be Lie groups is proposed
in \cite{Nayak2019}. The authors
present a theorem ensuring almost global tracking for fully-actuated systems on compact Riemannian manifolds, assuming the control inputs 
can be chosen to solve a 
``rank one'' overdetermined
linear system at each state and time. 
Solutions exist in some special cases (\textit{e.g.}, on $\mathbb{S}^2$ with carefully chosen potential, dissipation, and kinetic energy), although they are non-unique. In other cases, \textit{no} solutions exist, and the question of \textit{when} solutions exist in general is left open.
Also, the inputs may not depend continuously on state and time, a desirable property that justifies seeking only \textit{almost} global asymptotic stability
\cite{Bernuau2013}.

\begin{figure}[t]
	\vspace{4pt}
	\begin{center}
	\hspace{2pt}
	\foreach \n in {5,53,101,149}{%
		\includegraphics[width=.23\columnwidth,trim={15.7cm 5cm 15.7cm 5cm}, clip]%
		{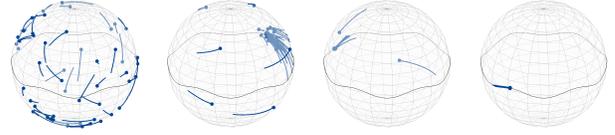}%
		\hfill%
	}
	\hspace{2pt}
\end{center}
	\vspace{-10pt}
	\caption{
		The proposed controller, applied to a mechanical system on $\mathbb{S}^2$ (\textit{e.g.}, the axisymmetric satellite). We show 
		snapshots of parallel rollouts from a random sampling of initial states (configuration and velocity) in $T\mathbb{S}^2$. All sampled rollouts converge to the reference trajectory. 
} 
	\vspace{-14pt}
	\label{sphere_tracking_figure}
\end{figure}

In view of these observations, and in pursuit of a systematic approach to tracking control for  fully-actuated mechanical systems in yet a broader setting, a question emerges:
\begingroup
\addtolength\leftmargini{.05in}
\begin{quote}
	\begin{center}
		\textit{On which class of manifolds can
			the tracking problem be reduced to the regulation problem?}
	\end{center}
\end{quote}
We pose this question in a global or almost global sense; if purely \textit{local} convergence suffices, the answer is trivial (\textit{i.e.}, ``all of them''), since smooth manifolds are locally Euclidean.

\endgroup

In this work, we consider mechanical systems evolving on 
\textit{homogeneous spaces}, a class of manifolds that includes all Lie groups but is more general.
We obtain globally valid ``error dynamics''
by extending an intrinsic error used in tracking for \textit{kinematic} systems in \cite{Hampsey2022} to the mechanical setting, exploiting the transitive action of a Lie group on the configuration manifold even when it lacks a group structure of its own. Ultimately, we contribute a systematic approach to tracking control with formal guarantees for a broad class of systems
that includes 
many space, aerial, and underwater robots.

\section{Mathematical Background}

For any map ${(x,y) \mapsto f(x,y)}$, we define the map ${f_x : y \mapsto f(x,y)}$ (resp. ${f^y : x \mapsto f(x,y)}$) by holding the first (resp. second) argument constant.
For a smooth map ${f : M \to N}$ between manifolds, its derivative is denoted ${\diff f : TM \to TN}$, while the dual of its derivative is the unique map ${\diff f^* : T^*N \to T^*M}$ such that ${\pairing{\omega_n}{\diff f(v_m)}= \pairing{\diff f^*(\omega_n)}{v_m}}$ for any vector ${v_m \in TM}$ and any covector ${\omega_n \in T_{f(m)}^*N}$.
The derivative of a function ${V : M \to \mathbb{R}}$ induces a covector field ${\diff V :M \to T^*M}$. We assume all given curves are smooth.

\subsection{Homogeneous Riemannian Manifolds}

\begin{definition}[see {\cite[Ch. 3]{RiemannianLee}}]
	\ 
	A 
	\textit{homogeneous Riemannian manifold} $(Q,\Phi,\kappa)$ is a smooth manifold $Q$ equipped with:
	\begin{enumerate}[i.]
		\item 	 a transitive left action $\Phi : G \times Q \to Q$ of a Lie group $G$, \textit{i.e.},
		${\forall \  q_1,q_2 \in Q}$, ${\exists \ g \in G \textrm{ s.t. } \Phi(g,q_1) = q_2}$, 
		and 
		\item a $\Phi$-invariant Riemannian metric $\kappa$, \textit{i.e.}, 
		${\forall \ g \in G}$ and ${v_q, w_q \in TQ}, \
			\kappa\big({v_q},{w_q}\big)
			=  
			\kappa\big({\diff\Phi_g (v_q),\diff \Phi_g (\omega_q)}\big)
		 $.
	\end{enumerate}
	The \textit{stabilizer} $G_q$ of ${q \in Q}$ is the maximal subgroup $G$ such that ${\forall \ g \in G_q, \ \Phi(g,q) = q}$. 
	The identity element is
${\mathds{1} \in G}$.
\end{definition}

\begin{example}[name={The $n$-Sphere},label=example:sphere]
	For any ${n \in \mathbb{N}}$, 
	consider the sphere $\mathbb{S}^n$ and the rotation group ${SO(n\hspace{-1.5pt}+\hspace{-1.5pt}1)}$.
	For notational convenience, we model these manifolds as embedded submanifolds of Euclidean spaces, \textit{i.e.}, we let
	${	\mathbb{S}^n = \{ s \in \mathbb{R}^{n+1} : s^{\textrm{T}} s = 1\}}$ and we let
	${SO(n\hspace{-1.5pt}+\hspace{-1.5pt}1) =}$ ${ \{ R \in \mathbb{R}^{n+1 \times n+1} :  R^{\textrm{T}} R = R \, R^{\textrm{T}} = \mathds{1}
		,  \ \det R = 1 \}}$.
	Let $\rho$ and $\Psi$ be the respective restrictions to ${\mathbb{S}^n \subset \mathbb{R}^{n+1}}$ 
	of the Euclidean metric 
	and 
	of the usual action of ${SO(n\hspace{-1.5pt}+\hspace{-1.5pt}1)}$,
	\textit{i.e.},
	\begin{equation}
		\begin{gathered}
			{\Psi} : SO(n\hspace{-1.5pt}+\hspace{-1.5pt}1) \times \mathbb{S}^n \to \mathbb{S}^n, \
			(R,q) \mapsto R \, q.
			\label{sphere_action}
		\end{gathered}
	\end{equation}
	Then, it can be shown that $(\mathbb{S}^n,\Psi,\rho)$ is an ${SO(n\hspace{-1.5pt}+\hspace{-1.5pt}1)}$-homogeneous Riemannian manifold.
	Moreover,
$\rho$ induces the 
	canonical isomorphisms 
	${\rho^\flat : T\mathbb{S}^n \to T^*\mathbb{S}^n,  v_q \to  v_q{}^{\textrm{T}} \textrm{ and}}$ 
	${\rho^\sharp : T^*\mathbb{S}^n \to T\mathbb{S}^n,  f_q \to  f_q{}^{\textrm{T}}}$,
	and
			${G_q \cong SO(n)} \ \forall \ {q \in \mathbb{S}^n}$.
\end{example}

\begin{example}[name={A Lie Group},label=example:group]
	Consider any Lie group $G$,
	the left action of $G$ on itself (\textit{i.e.}, ${L : (h,g) \mapsto h g}$), and any $L$-invariant metric $\kappa_{\mathbb{I}}$. 
	Then
	by \cite[Thm. 5.38]{BulloAndLewis2004},
	there exists an inner product $\mathbb{I}$ on ${\mathfrak{g} \cong T_{\mathds{1}} G}$ 
	such that 
	${ \forall \ v_g, w_g \in TG}$,
	\begin{align}
		\kappa_{\mathbb{I}} (v_g,w_g) = \mathbb{I} \, \big(\diff L_{g^{-1}}(v_g),\diff L_{g^{-1}}(w_g)\big).
		\label{invariant_metric_on_lie_group}
	\end{align}
	Moreover,	 $(G,L,\kappa_{\mathbb{I}})$ is a homogeneous Riemannian manifold.
	$L$ is a free action, \textit{i.e.}, for ${g \neq \mathds{1}}$, the map $L_g$ has no fixed points. Thus, for any ${g \in G}$, the stabilizer is just ${G_g = \{\mathds{1}\}}$.
	It is clear that for all ${\xi \in \mathfrak{g} 
	}$ and ${\tau \in \mathfrak{g}^* 
	}$, we have
	${\kappa_{\mathbb{I}}^\flat(\xi) = \mathbb{I}^\flat(\xi)}$
	and
	${\kappa_{\mathbb{I}}^\sharp(\tau) = \mathbb{I}^\sharp(\tau)}$ (see, \textit{e.g.}, \cite[Sec. 2.3.4]{BulloAndLewis2004}).
\end{example}
\subsection{Navigation Functions}

The following functions are described in \cite{Koditschek1989}, in the more general setting of manifolds with boundary. They are useful in generating artificial potential forces for stabilization.
\begin{definition}
	On  a boundaryless manifold $Q$,
	a $0_Q$-\textit{navigation function} is a proper Morse function ${P : Q \to \mathbb{R}}$ that has the point ${0_Q \in Q}$ as its unique local minimizer.	A  navigation function is said to be \textit{perfect} if its domain admits no other navigation function with fewer critical points.
\end{definition}

For example, for any ${0_{\mathbb{S}^n} \in \mathbb{S}^n}$ and $k_P > 0$, the map
\begin{equation}
	P_{\mathbb{S}^n} : \mathbb{S}^n \to \mathbb{R}, \ q \mapsto -k_P \, ({0_{\mathbb{S}^n}})^{\textrm{T}} \, q
	\label{sphere_navigation}
\end{equation}
is a ${0_{\mathbb{S}^n}}$-navigation function. 
Following \cite{Koditschek1989},
a  perfect {$\mathds{1}\textrm{-navigation}$} function on ${SE(3)}$ can be given by 
\begingroup
\setlength\arraycolsep{3pt}
\renewcommand*{\arraystretch}{.85}
\begin{align}
	\label{se3_navigation}
	P_{SE(3)} : 
	\begin{pmatrix}
		R & x \\ 0_{1\times 3} & 1 
	\end{pmatrix} \mapsto 
\tr \big(K_R (\mathds{1} - R)\big) + 	x^\mathrm{T} K_x\, x,
\end{align}
\endgroup
where ${x \in \mathbb{R}^3}$,  ${R \in SO(3)}$, 
and $K_x, K_R$ are $3 \times 3$ symmetric positive-definite matrices where $K_R$ has distinct eigenvalues. Navigation functions on $\mathbb{R}^3$, $SE(2)$, $SO(3)$, and ${SO}(2)$ can be obtained from \eqref{se3_navigation},
and 
a navigation function on the product of boundaryless manifolds can be given by the sum of navigation functions on each factor
\cite{Cowan2007}. Thus, 
we can easily obtain an explicit navigation function on 
any product space whose factors are all $n$-spheres or closed subgroups of $SE(3)$, capturing most homogeneous spaces of interest.

\subsection{Fully-Actuated Mechanical Systems}

\begin{definition}
	\label{mechanical_system}
	A \textit{fully-actuated mechanical system} on a homogeneous Riemannian manifold ${(Q,\Phi, \kappa)}$
	has dynamics
	\begin{equation}
		\nabla_{\dot{q}} \dot{q} = \kappa^\sharp(f_q),
		\label{mechanical_system_dynamics}
	\end{equation} 
	where the state $\dot{q} \in TQ$ encompasses both the configuration and velocity, the force $f_q \in T_q^*Q$ is the control input, 
	and 
	${\nabla}$
	is the \textit{Riemannian} (or \textit{Levi-Civita}) \textit{connection} induced by $\kappa$. 
\end{definition}

\begin{remark}
	[Generality]
	Even if the system were subject to additional state-dependent forces ${F : TQ \to T^*Q}$, or it were governed by the Riemannian connection $\widetilde{\nabla}$ of a different metric $\widetilde{\kappa}$ that fails to be $\Phi$-invariant, the system can be rendered 
	in the form \eqref{mechanical_system_dynamics}
	by the static state feedback 
	\begin{align}
		f_q(\dot{q})
		&= \widetilde{\kappa}^\flat \big( \big({\widetilde{\nabla} - {\nabla}}\big)({\dot{q}},{\dot{q}}) 
		- F(\dot{q})
		+ \tilde{\kappa}^\sharp({f}_q^{\hspace{.5pt}\prime}) \big),
		\label{feedback_transformation}
	\end{align}
	where 
	$\big({\widetilde{\nabla} - {\nabla}}\big)$ is the ``difference tensor'' (see
	\cite[Prop. 4.13]{RiemannianLee}) between the Riemannian connection of $\tilde{\kappa}$ and of an invariant metric 
	${\kappa}$, 
	while ${f}_q^{\hspace{.5pt}\prime}$ is a ``virtual'' input.
	Since any homogeneous space with compact stabilizer admits an invariant metric 
	\cite[Cor. 3.18]{RiemannianLee}, 
	a tracking controller for a $\Phi$-invariant unforced system will, in practice, yield a tracking controller for the more general system via \eqref{feedback_transformation}. Also, the invariant case is common, and it simplifies the computations.
\end{remark}

\begin{definition}
	A control policy ${(t,\dot{q}) \mapsto f_q(t,\dot{q})}$ for the system \eqref{mechanical_system_dynamics} achieves \textit{almost global asymptotic tracking} of a reference trajectory ${q_d : \mathbb{R} \to Q}$ if, for each ${t_0 \in \mathbb{R}}$, there exists a residual set ${S_{t_0} \subseteq TQ}$ of full measure such that if
	${\dot{q}(t_0) \in S_{t_0}}$, then ${\dist_{TQ}\big(\dot{q}(t),\dot{q}_d(t)\big) \to 0}$ as ${t \to \infty}$. Note that $\dist_{TQ}$ is the Riemannian distance in $TQ$ corresponding to the Sasaki metric (the natural Riemannian metric on $TQ$ induced by $\kappa$). %
\end{definition}

\section{An Intrinsic, State-Valued Tracking Error}

In this section, we 
take an approach similar to that of \cite{Hampsey2022} (which dealt with \textit{kinematic} systems) to describe an intrinsic state-valued tracking error suitable for \textit{mechanical} systems.

\subsection{Tracking Error in Homogeneous Spaces}

\begin{definition}
	For an \textit{origin} ${0_Q} $ in $(Q,\Phi,\kappa)$ and a smooth curve ${q_d : \mathbb{R} \to Q}$, a $0_Q$-\textit{lift of $q_d$} is any smooth curve ${g_d : \mathbb{R} \to G}$ such that 
$		\Phi\big(g_d(t), 0_Q\big) = q_d(t) \textrm{ for all } t \in \mathbb{R}.$
\end{definition}
\begin{definition}
For any actual and reference configuration trajectories ${q, q_d : \mathbb{R} \to Q}$ and 
a given $0_Q$-lift $g_d$ of $q_d$, 
the \textit{configuration error trajectory} is the smooth curve given by	
\begin{equation}
	\label{error_curve}
	e : \mathbb{R} \to Q, \ t \mapsto 
			\Phi 
	\big(
	{g_d}(t)^{-1},
	q(t)
	\big),
\end{equation}
while the \textit{state error trajectory} is its derivative, ${\dot{e} : \mathbb{R} \to TQ}$.
\end{definition}

In \cite{Hampsey2022}, the authors use a tracking error of the form \eqref{error_curve} for kinematic (\textit{i.e.}, first-order) systems to synthesize optimal tracking controllers with local convergence. (In fact, such an error state has its roots in observer design \cite{Mahony2020}.)
We lift this definition to the tangent bundle to obtain a state-valued tracking error for mechanical (\textit{i.e.}, second-order) systems.
The following extends {\cite[Prop 4.1]{Hampsey2022}} to the second-order setting.
	\begin{proposition}
	\label{error_function_does_its_job}
	Consider any actual and reference trajectories ${q, q_d : \mathbb{R} \to Q}$.
	Let ${g_d : \mathbb{R} \to G}$ be a $0_Q$-lift of ${q_d}$, and let ${0_{TQ} \in TQ}$ be the zero tangent vector at $0_Q$. Then, for any time ${t \in \mathbb{R}}$,  
	${\dot{q}(t) = \dot{q}_d(t)}$
	if and only if
	${\dot{e}(t) = 0_{TQ}}$.
\end{proposition}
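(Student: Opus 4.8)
\emph{Proof plan.} The plan is to deduce both implications from a single structural observation: at each time $t$ the error configuration $e(t)$ is obtained from $q(t)$ by applying the diffeomorphism $\Phi_{g_d(t)^{-1}} : Q \to Q$; this assignment depends smoothly on $t$ and is invertible (one recovers $q(t)$ by applying $\Phi_{g_d(t)}$), and it sends the reference curve $q_d$ to the \emph{constant} curve $t \mapsto 0_Q$. Now ``$\dot q(t) = \dot q_d(t)$'' is precisely the assertion that $q$ and $q_d$ have the same $1$-jet at $t$ — equality of tangent vectors already forces equality of base points, so this subsumes $q(t) = q_d(t)$ — and, likewise, ``$\dot e(t) = 0_{TQ}$'' is precisely the assertion that $e$ has the same $1$-jet at $t$ as the constant curve at $0_Q$, since $0_{TQ}$ is based at $0_Q$ and so this encodes both $e(t) = 0_Q$ and vanishing velocity. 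Because composing a curve with a smoothly time-varying diffeomorphism both preserves and reflects the property of having a prescribed $1$-jet at $t$ (it is reflected because the inverse family is again such a composition), the two assertions are equivalent. Note that neither the metric $\kappa$ nor the dynamics \eqref{mechanical_system_dynamics} enters; this is purely a statement about the error construction, and its configuration-level shadow $e(t) = 0_Q \iff q(t) = q_d(t)$ is just {\cite[Prop.~4.1]{Hampsey2022}} — what is new is the tangent-lifted refinement.

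To make this precise I would introduce the jointly smooth map $F : \mathbb{R} \times Q \to Q$, $F(t,q) = \Phi\big(g_d(t)^{-1}, q\big)$ (smooth because $\Phi$, the lift $g_d$, and inversion in $G$ are smooth), so that $e = F(\cdot, q(\cdot))$ and each $F_t : q \mapsto F(t,q)$ is a diffeomorphism of $Q$ with smooth inverse $q \mapsto \Phi(g_d(t), q)$. Using the lift property $\Phi(g_d(t), 0_Q) = q_d(t)$ and the left-action identity $\Phi\big(g,\Phi(h,\cdot)\big) = \Phi(gh,\cdot)$,
\[
	F\big(t, q_d(t)\big) = \Phi\big(g_d(t)^{-1},\Phi(g_d(t),0_Q)\big) = \Phi(\mathds{1},0_Q) = 0_Q,
\]
so the reference error curve $e_d := F(\cdot, q_d(\cdot))$ is constantly $0_Q$ and hence $\dot e_d(t) = 0_{TQ}$ for all $t$. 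The chain rule applied to $t \mapsto F(t, q(t))$ writes $\dot e(t) \in T_{e(t)}Q$ as $\diff F_t\big(\dot q(t)\big)$ plus the velocity at $t$ of $s \mapsto F(s, q(t))$, and both summands are determined by $t$ together with the pair $(q(t),\dot q(t))$, \textit{i.e.}, by the $1$-jet of $q$ at $t$. Hence $\dot q(t) = \dot q_d(t)$ forces $q(t) = q_d(t)$ and then the identical formula yields $\dot e(t) = \dot e_d(t) = 0_{TQ}$; conversely, rerunning the same argument with $F$ replaced by the inverse family $\bar F(t,q) := \Phi(g_d(t),q)$ — for which $q = \bar F(\cdot, e(\cdot))$ and $\bar F(t, 0_Q) = q_d(t)$ — shows that $\dot e(t) = 0_{TQ}$ forces $\dot q(t)$ to equal the velocity at $t$ of $s \mapsto \bar F(s, 0_Q) = q_d(s)$, that is, $\dot q(t) = \dot q_d(t)$.

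I do not expect a genuine obstacle here: no analytic or limiting argument is needed, only the chain rule and the invertibility of the error map. The one place demanding care is the bookkeeping of base points of tangent vectors — in particular, recognizing that each of the hypotheses $\dot q(t) = \dot q_d(t)$ and $\dot e(t) = 0_{TQ}$ silently carries a configuration-level equality ($q(t) = q_d(t)$ and $e(t) = 0_Q$, respectively) that must be extracted before the velocity-level comparison is even meaningful. (An explicit expression for $\dot e(t)$ in terms of $\dot g_d(t)$, $\dot q(t)$, and the infinitesimal generators of $\Phi$ can be obtained in the same way and will be useful for the controller design to follow, but it is not required for this proposition.)
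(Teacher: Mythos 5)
Your proof is correct and takes essentially the same route as the paper's: both proofs differentiate the error map (or its inverse) via the chain rule, first extract the base-point equality ($e(t)=0_Q$ or $q(t)=q_d(t)$) implicit in the tangent-vector hypothesis, and then use the invertibility of $\Phi_{g_d(t)}$ (equivalently, of $\diff\Phi_{g_d(t)}$) to pass the velocity comparison across. The only stylistic difference is that the paper handles both directions from the single identity $q = \Phi(g_d, e)$, $q_d = \Phi(g_d, 0_Q)$, invoking the automorphism property of $\diff\Phi_{g_d}$ explicitly for the necessity direction, whereas you phrase both directions symmetrically as ``a $1$-jet is pushed through a smoothly time-varying diffeomorphism and its inverse family''; this buys a cleaner symmetry of presentation but is the same argument underneath.
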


\begin{proof}
The actual and reference configuration can be expressed (suppressing the $t$-dependence for brevity) as
\begin{align}
	q &= \Phi (g_d, \Phi({g_d^{-1}}, q) ) = \Phi({g_d},e), 
	\quad 
	q_d  = \Phi(g_d,0_Q),
	\label{q_and_q_d_computation}
\end{align}
since by assumption, $g_d$ is a $0_Q$-lift of $q_d$. Thus, we have
\begin{subequations}
	\begin{align}
		\dot{q} = \diff \Phi^e (\dot{g}_d) &+ \diff \Phi_{g_d}(\dot{e}), 
		\label{q_dot_computation}
		\\
		\dot{q}_d = \diff \Phi^{0_Q} (\dot{g}_d) &+ \diff \Phi_{g_d}(0_{TQ}).
		\label{q_d_dot_computation}
	\end{align}
\end{subequations}
Assuming for sufficiency that ${\dot{e} = 0_{TQ}}$ (and therefore ${e = 0_Q}$), \eqref{q_dot_computation}-\eqref{q_d_dot_computation} immediately implies that ${\dot{q} = \dot{q}_d}$. Assuming for necessity that ${\dot{q} = \dot{q}_d}$ (and therefore ${q = q_d}$), we recall that  for each ${g \in G}$, the maps $\Phi_g$ and $\diff \Phi_g$ are automorphisms of $Q$ and $TQ$ respectively. 
For this reason, \eqref{q_and_q_d_computation} %
implies ${e = 0_Q}$, while that conclusion and \eqref{q_dot_computation}-\eqref{q_d_dot_computation}
imply that ${\dot{e} = 0_{TQ}}$.
\end{proof}

	In summary, the intrinsic tracking error smoothly transforms the state space (in a manner depending smoothly on time) in such a way that only the current reference state is mapped to the zero tangent vector over the origin of the lift.
	Also, for the ($\Phi$-invariant) Riemannian distance ${\dist_Q}$, \eqref{q_and_q_d_computation} implies that
	${\dist_Q\big(q(t),q_d(t)\big) = \dist_Q\big(e(t),0_Q\big)}$. 
	Thus, \eqref{error_curve} encodes the distance from the reference configuration ``accurately''.

	\subsection{Computing Horizontal Lifts of Curves}

The following notions will aid in computing a lift of a given reference trajectory (and ultimately, the tracking error).

\begin{definition}[see {\cite[Sec. 23.4]{GallierI}}]
	For a homogeneous Riemannian manifold $(Q,\Phi,\kappa)$ with designated origin ${0_Q \in Q}$, a 
	\textit{reductive decomposition} is a splitting ${\mathfrak{g} = \mathfrak{f} \oplus \mathfrak{q}}$, where 
	${F = \exp \mathfrak{f} = G_{0_Q}}$ (the stabilizer at $0_Q$) and
	${\mathfrak{q} \subseteq \mathfrak{g}}$ is an 
	$\Ad_F$-invariant subspace (which need not be closed under $[\hspace{.5pt}\cdot\hspace{1pt},\hspace{-1pt}\cdot\hspace{.5pt}]$).
\end{definition}

In fact, every homogeneous Riemannian manifold admits a (perhaps non-unique) reductive decomposition \cite[Prop. 1]{Kowalski2001}.

\begin{proposition}[Horizontal Lifts in Reductive Homogeneous Spaces]
	\label{horizontal_lifts_reductive}
	Consider a homogeneous Riemannian manifold $(Q,\Phi,\kappa)$ with origin 
	${0_Q \in Q}$ and 
	reductive decomposition ${\mathfrak{g} = \mathfrak{f} \oplus \mathfrak{q}}$. For each smooth curve ${q_d : \mathbb{R} \to Q}$ and initial point ${g_0 \in G}$ such that ${\Phi(g_0,0_Q) = q_d(0)}$, 
	there exists a unique smooth curve 
${g_d : \mathbb{R} \to G}$ (called the {\normalfont horizontal lift of $q_d$ through $g_0$}) 
	such that
(i)
	${g_d}$ is a $0_Q$-lift  of $q_d$,
(ii) $g_d(0) = g_0$, and
(iii) $\diff L_{g_d(t)}^{-1}\big(\dot{g}_d(t)\big) \in \mathfrak{q} $ for all $ t \in \mathbb{R}$.
\end{proposition}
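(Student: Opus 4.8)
The plan is to recognize the asserted curve $g_d$ as the horizontal lift of $q_d$ for a canonical principal connection on $G$, and then to establish existence and uniqueness by reducing to an ODE on the stabilizer. Since $\Phi$ is transitive, the orbit map $\pi : G \to Q$, $\pi(g) := \Phi(g,0_Q)$, is a surjective submersion whose fibers are the left cosets of $F = G_{0_Q}$; as a stabilizer, $F$ is closed, hence an embedded Lie subgroup, so $\pi$ is a principal $F$-bundle, with $\pi \circ L_g = \Phi_g \circ \pi$ and vertical distribution $\ker\diff\pi|_g = \diff L_g(\mathfrak{f})$. The complement $H_g := \diff L_g(\mathfrak{q})$ is invariant under the right $F$-action exactly because $\mathfrak{q}$ is $\Ad_F$-invariant, so $\{H_g\}$ is a principal connection. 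Conditions (i)--(ii) say that $g_d$ covers $q_d$ and starts at $g_0$, and (iii) --- since $\diff L_{g_d(t)}^{-1}$ is invertible --- is equivalent to $\dot g_d(t) \in \diff L_{g_d(t)}(\mathfrak{q}) = H_{g_d(t)}$, i.e.\ to horizontality of $g_d$. Thus the statement is precisely the existence and uniqueness of the horizontal lift of $q_d$ through $g_0$, defined on all of $\mathbb{R}$.

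For a concrete argument I would fix an auxiliary global smooth lift $\sigma : \mathbb{R} \to G$ of $q_d$ (which exists because $\pi$ is a fiber bundle and $\mathbb{R}$ is contractible) and, after right-translating $\sigma$ by the unique $h_0 \in F$ with $\sigma(0)h_0 = g_0$, assume $\sigma(0) = g_0$. Every curve satisfying (i)--(ii) is then uniquely of the form $g_d(t) = \sigma(t)h(t)$ with $h : \mathbb{R} \to F$ smooth and $h(0) = \mathds{1}$, and conversely every such $h$ yields a valid candidate. Differentiating the product by the Leibniz rule $\tfrac{d}{dt}(\sigma h) = \diff R_h(\dot\sigma) + \diff L_\sigma(\dot h)$ (where $R_h$ is right translation) and using that left and right translations commute, so that $\diff R_h \circ \diff L_{h^{-1}}|_{\mathds{1}} = \Ad_{h^{-1}}$, one obtains
\[
  \diff L_{g_d(t)}^{-1}\big(\dot g_d(t)\big) \;=\; \Ad_{h(t)^{-1}}\big(\eta(t)\big) \,+\, \diff L_{h(t)}^{-1}\big(\dot h(t)\big), \qquad \eta(t) := \diff L_{\sigma(t)}^{-1}\big(\dot\sigma(t)\big) \in \mathfrak{g}.
\]
Write $\eta = \eta_{\mathfrak{f}} + \eta_{\mathfrak{q}}$ along $\mathfrak{g} = \mathfrak{f}\oplus\mathfrak{q}$. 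Since this splitting is $\Ad_F$-invariant ($\mathfrak{f}$ automatically, $\mathfrak{q}$ by hypothesis) and $\diff L_{h(t)}^{-1}(\dot h(t)) \in \mathfrak{f}$ because $h$ is a curve in $F$, the right-hand side lies in $\mathfrak{q}$ if and only if $\diff L_{h(t)}^{-1}(\dot h(t)) = -\Ad_{h(t)^{-1}}(\eta_{\mathfrak{f}}(t))$; as $\diff L_h \circ \Ad_{h^{-1}} = \diff R_h$ on $\mathfrak{g}$, this is the non-autonomous ODE
\[
  \dot h(t) \;=\; -\diff R_{h(t)}\big(\eta_{\mathfrak{f}}(t)\big),
\]
with right-hand side smooth in $(t,h)$ and tangent to $F$. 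Local existence and uniqueness of the solution with $h(0) = \mathds{1}$ yield local existence and uniqueness of $g_d$, and since (i)--(iii) characterize $g_d$ intrinsically, the result is independent of the choice of $\sigma$.

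The main obstacle is upgrading the local solution to one defined on all of $\mathbb{R}$ --- a general Ehresmann connection would only yield local horizontal lifts, so the Lie-group (equivalently, principal-bundle) structure is essential here. The key point is that $\dot h = -\diff R_h(\eta_{\mathfrak{f}}(t))$ is of ``right-invariant'' (hence linear) type: with $t$ frozen its flow is left translation by $\exp(-s\,\eta_{\mathfrak{f}}(t))$, which is defined for all $s$, and the non-autonomous flow is a limit of finite compositions of such translations; equivalently, on a matrix group it reads $\dot h = -\eta_{\mathfrak{f}}(t)\,h$, a linear system with global solutions. I would therefore invoke the standard completeness of such equations on Lie groups --- equivalently, the classical fact that horizontal lifts of curves in a principal bundle with a principal connection are defined on the entire parameter interval (see, e.g., \cite{GallierI}) --- to conclude that $h$, and hence $g_d$, exists and is unique on all of $\mathbb{R}$.
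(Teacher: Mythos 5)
Your proof is correct and takes essentially the same approach as the paper: identify $\pi : G \to Q$, $g \mapsto \Phi(g,0_Q)$ as a principal $F$-bundle, observe that $\Ad_F$-invariance of $\mathfrak{q}$ makes $g \mapsto \diff L_g(\mathfrak{q})$ a principal connection so that condition (iii) is horizontality, and invoke the existence and uniqueness of horizontal lifts of curves in principal bundles. Your additional reduction to the right-invariant ODE $\dot h = -\diff R_h\big(\eta_{\mathfrak{f}}(t)\big)$ on $F$ is precisely the standard proof of that horizontal-lift theorem, which the paper cites rather than reproduces; it is a correct elaboration (and properly flags why the solution extends to all of $\mathbb{R}$), not a different argument.
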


\begin{proof}
	By {\cite[Prop 9.33]{GallierII}},
	 $G$ is a (left) principal $F$-bundle over ${Q = G / F}$. In particular, the projection map is given by  
	${\pi: G \to Q, \ g \mapsto \Phi(g, 0_Q)}$,
	while the free and proper action is given by
	${\Upsilon : F \times G \to G, (f,g) \mapsto R_{f^{-1}}(g)}$. Using the $\Ad_F$-invariance of $\mathfrak{q}$, it can be shown that 
	${HG = }$ ${\{\diff L_g (\mathfrak{q}) : g \in G\}}$ 
	is a principal connection on ${\pi: G \to Q}$.
	Then, the claim follows from the existence and uniqueness of horizontal lifts in principal bundles (see \cite[Sec. 2.9]{Bloch2003}).
\end{proof}

		\begin{remark}[Sections, Lifts, and Nontrivial Bundles]
			The proof of Proposition \ref{horizontal_lifts_reductive} describes a sense in which a lift projects ``down'' to the original curve via $\pi$.
			If the principal bundle ${\pi : G \to Q}$ is \textit{trivial} (\textit{i.e.}, ${G \cong Q \times  F}$ \textit{globally} and not merely \textit{locally}), then there exist \textit{global sections} ${\sigma : Q \to G}$, \textit{i.e.},  smooth maps satisfying ${\pi \circ \sigma = \id}$.
			In fact, any such section furnishes a (perhaps non-horizontal) lift ${g_d : t \mapsto \sigma \circ q_d(t)}$. However, 
	when ${\pi : G \to Q}$ is a \textit{nontrivial} bundle (\textit{e.g.},  the bundle corresponding to $(\mathbb{S}^2, \Psi, \rho)$, namely ${\pi : SO(3) \to \mathbb{S}^2}$), 
	the nonexistence of global sections
 makes it impossible to
 generate global lifts using a section. Nor can the initial value $g_{0}$ of a horizontal lift depend continuously on $q_d(0)$ alone. Thus, continuous deformation of  the reference trajectory can result in discontinuous changes in the tracking error. However, such a discontinuity is with respect to the \textit{choice} of reference trajectory (\textit{i.e.}, the planning layer); once a reference trajectory $q_d$ has been selected, the configuration error will depend smoothly on both time and state. 
	Note also that even for a \textit{horizontal} lift $g_d$ of some $q_d$, 
when
	${q_d(t_1) = q_d(t_2)}$ it may still be that  ${g_d(t_1) \neq g_d(t_2)}$ 
	due to nontrivial ``holonomy'' (see \cite[Fig. 3.14.2]{Bloch2003}),
	which cannot always be avoided.

\end{remark}

\begin{remark}[Computing Horizontal Lifts Numerically]
	\label{computing_horizontal_lifts}
Let ${(\pi \times \varphi^i):}$ ${ \pi^{-1}(U_i) \subseteq G}$ ${\to}$ ${(U_i \subseteq Q) \times F,}$ ${i \in \{1,\ldots,k\} }$ be a collection of local trivializations covering ${\pi : G \to Q}$. Let ${\mathbb{A}^i : TU_i \to \mathfrak{f}}$ be the local connection forms for $HG$ (see \cite[Prop. 2.9.12]{Bloch2003}). Suppose ${q_d(t) \in U_j}$ for ${t \in [t_1,t_2]}$ and let ${f^j : [t_1,t_2] \to F}$ solve the initial value problem (IVP)
	\begin{align}
		f^j(t_1) = \varphi^{\hspace{.75pt}j} \big(g(t_1)\big), \quad \dot{f}^j(t) = \diff L_{f^j(t)} \circ \mathbb{A}^j\big(\dot{q}_d(t)\big).
		\label{fiber_IVP}
	\end{align}
	Then the horizontal lift $g_d$ through $g_0$ of the curve $q_d$ satisfies
	\begin{align}
		g_d(t) = (\pi \times \varphi^{\hspace{.75pt}j})^{-1}\big(q_d(t),f^j(t)\big) \textrm{ for all } t \in [t_1,t_2].
		\label{lift_reconstruction}
	\end{align}
	Moreover, the restriction of the reference trajectory to any finite interval may be subdivided
	into finitely many segments, each contained within a single trivialization. 
	We repeatedly solve the IVP \eqref{fiber_IVP} numerically for each segment,
	ultimately reconstructing a smooth lift in $G$ via \eqref{lift_reconstruction}. Computationally, this is preferable to solving an IVP in $G$ directly, since it ensures that integration error will accumulate {only along the fibers} (\textit{vs.} horizontally). 
	Hence, numerical integration accuracy is not paramount (reducing the computational burden), since the solution will be an \textit{exact} lift, even if it is not perfectly {horizontal}. 
	The value of the lifted reference at each time $t$ can also be computed ``just in time'' to compute the tracking error.
\end{remark}

\begin{example}[name={The $n$-Sphere},continues=example:sphere]
		\label{kinematic_n_sphere}
\begingroup
\setlength\arraycolsep{3pt}
\setlength\abovedisplayskip{4pt}
\setlength\belowdisplayskip{4pt}
\renewcommand*{\arraystretch}{.9}
On the homogeneous Riemannian manifold $(\mathbb{S}^n,\Psi,\rho)$, we choose the origin ${0_{\mathbb{S}^n} = (0,\ldots,1) := e_n}$ and  
identify $\mathfrak{so}(n)$ with the ${n \times n}$ skew-symmetric matrices.
Following \cite[Sec. 23.5]{GallierI},
we have the 
 reductive decomposition ${\mathfrak{so}(n\hspace{-1.5pt}+\hspace{-1.5pt}1) = \mathfrak{f} \oplus \mathfrak{q}}$, where
\begin{equation*}
	\mathfrak{f} = \left\{
	\begin{pmatrix}
		\xi & 0 \\ 0 & 0 
	\end{pmatrix} : \xi \in \mathfrak{so}(n)
	\right\}, \  
	\mathfrak{q} = 
	\left\{
\begin{pmatrix}
	0 & \upsilon \\ -\upsilon^{\textrm{T}} & 0 
\end{pmatrix} : 
\upsilon \in \mathbb{R}^{{n}}	
\right\}.
\end{equation*}
We horizontally lift a given configuration reference trajectory ${q_d : \mathbb{R} \to \mathbb{S}^n}$
in the sense of Proposition \ref{horizontal_lifts_reductive} to obtain the lifted reference ${R_d : \mathbb{R} \to {SO(n\hspace{-1.5pt}+\hspace{-1.5pt}1)}}$, performing numerical computations in the manner of Remark \ref{computing_horizontal_lifts} for accuracy. \endgroup
Then, 
the configuration error in the form \eqref{error_curve} is simply given by
\begin{align}
	e : \mathbb{R} 
	\to \mathbb{S}^n, \ t \mapsto {R_d}(t)^{\textrm{T}} \, q(t).
	\label{sphere_error_map}
\end{align}
	\end{example}
\begin{example}[name={A Lie Group},continues=example:group]	
	Since the stabilizer of any point $g$ on the  homogeneous Riemannian manifold $(G,L,\kappa_{\mathbb{I}})$ is ${G_g = \{\mathds{1}\}}$, $0_G$-lifts (of any kind) are unique for each ${0_G \in G}$. Making
	the usual
	choice
	${0_G = \mathds{1}}$, the lift is the original curve itself,
	and the configuration error is
	\begin{equation}
		e : \mathbb{R} \to G, \ t \mapsto {g_d}(t)^{-1} \, g(t),
		\label{lie_group_error_map}
	\end{equation}
	showing that in the special case of Lie groups, the configuration error reduces elegantly to a familiar, intuitive form, called the ``right group error function'' \cite[p. 548]{BulloAndLewis2004}. In the additive group $G = (\mathbb{R}^n, +)$ (\textit{i.e.}, a vector space), the configuration error is (unsurprisingly) the map ${t \mapsto  g(t) - g_d(t)}$.
\end{example}

\section{Almost Global Asymptotic Tracking}

We now present the main result on trajectory tracking. First, we define the \textit{body velocity} of any curve ${\gamma : \mathbb{R} \to G}$ to be the curve in $\mathfrak{g}$ 
given by ${t \mapsto \diff L_{\gamma(t)}^{-1} \big( \hspace{.5pt} \dot{\gamma}(t) \hspace{.5pt} \big)}$.

\begin{theorem}
	Consider a fully-actuated mechanical system on $(Q, \Phi, \kappa)$, \label{mechanical_tracking_control_theorem} a reference trajectory ${q_d : \mathbb{R} \to Q}$, and any ${0_Q}$-lift ${g_d : \mathbb{R} \to G}$ of $q_d$ with bounded body velocity.
	Let ${P}$ be a ${0_Q}$-navigation function and $\nu$ be a Riemannian metric on $Q$. For each 
	(fixed) 
	state 
	${\dot{q} \in TQ}$, 
	let 
	$
		{		e^q : 
			r \mapsto \Phi_{g_d(r)}^{-1} \big(q\big)
		}%
	$
	be a curve
	 in $Q$ and
	$
	{%
			X^{\dot{q}} : }$ ${r \mapsto \diff \Phi_{g_d(r)}^{-1} (\dot{q})
	}$
	be a vector field along $e^q$.
	Then, the control  policy
\begin{equation}
	\begin{aligned}
		\label{mechanical_tracking_control}
		\hspace{-1pt} 
		f_q(t,\dot{q})
		=
		-\diff  \Phi^*_{g_d^{-1}}
		\big(
		\diff &P(e)
+
		\nu^\flat(\dot{e})
\, +
		\\ &
		\kappa^\flat 
		\circ 
					\big(
		\nabla_{\dot{e}^{q}} 
		( \dot{e}^{q}+ 2 \, 
		X^{\dot{q}}
		)
					\big)
		(t)
		\big)
	\end{aligned}
\end{equation}
		achieves almost global asymptotic tracking of the reference and local exponential convergence of the tracking error.
	\end{theorem}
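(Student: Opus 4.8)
The plan is to show that the feedback~\eqref{mechanical_tracking_control} renders the dynamics of the error curve $e$ those of an \emph{autonomous} simple mechanical system with $P$ as potential energy and $\nu$ as a Rayleigh dissipation, i.e.\ ${\nabla_{\dot e}\dot e = -\grad P(e) - \kappa^\sharp\nu^\flat(\dot e)}$; then to apply the classical stability theory of such systems; and finally to transfer the conclusions back to the physical state. To obtain the error dynamics, write ${E : \mathbb{R}\times Q \to Q}$, ${(r,x)\mapsto \Phi_{g_d(r)^{-1}}(x)}$, so that ${e = E\circ c}$ with ${c(t) = (t,q(t))}$, and observe that $e^q$ and $X^{\dot{q}}$ in the statement are, along the curve ${r\mapsto(r,q)}$, precisely ${\diff E(1,0)}$ and ${\diff E(0,\dot q)}$. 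I would then apply the standard formula ${\nabla_{\dot e}\dot e = \diff E\big(\nabla^{\mathbb{R}\times Q}_{\dot c}\dot c\big) + (\nabla\diff E)(\dot c,\dot c)}$ for the covariant acceleration of a composition, with the product connection on $\mathbb{R}\times Q$ (so that ${\nabla^{\mathbb{R}\times Q}_{\dot c}\dot c = (0,\nabla_{\dot q}\dot q)}$), and expand the symmetric bilinear form ${\nabla\diff E}$ over ${\dot c = (1,0)+(0,\dot q)}$: the pure-$Q$ term ${\nabla\diff E\big((0,\dot q),(0,\dot q)\big)}$ vanishes because each ${x\mapsto\Phi_{g_d(r)^{-1}}(x)}$ is an isometry (hence affine, with vanishing Hessian), the ${(1,0)}$--${(1,0)}$ and ${(1,0)}$--${(0,\dot q)}$ terms assemble exactly into ${\nabla_{\dot{e}^q}(\dot{e}^q + 2X^{\dot{q}})(t)}$, and ${\diff E\big(\nabla^{\mathbb{R}\times Q}_{\dot c}\dot c\big) = \diff\Phi_{g_d(t)^{-1}}\big(\kappa^\sharp(f_q)\big)}$ by~\eqref{mechanical_system_dynamics}. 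Substituting~\eqref{mechanical_tracking_control} and using, for the isometry ${\Phi_g}$, the identities ${\diff\Phi_g\circ\kappa^\sharp\circ\diff\Phi_g^* = \kappa^\sharp}$ and ${\kappa^\sharp\circ\kappa^\flat = \id}$, the ${\nabla_{\dot{e}^q}(\dot{e}^q + 2X^{\dot{q}})}$ contribution cancels while the $\diff P(e)$ and $\nu^\flat(\dot e)$ contributions remain, leaving ${\nabla_{\dot e}\dot e = -\grad P(e) - \kappa^\sharp\nu^\flat(\dot e)}$ exactly. I expect this step -- in particular, identifying the moving-frame terms of the error dynamics with the bracketed expression of~\eqref{mechanical_tracking_control} -- to be the main obstacle; everything after it is comparatively routine.

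Next I would establish stability of the error system. The total energy ${\tfrac12\kappa(\dot e,\dot e) + P(e)}$ decreases along solutions at rate ${-\nu(\dot e,\dot e)\le 0}$, strictly away from the zero section, and since the navigation function $P$ is proper its energy sublevel sets are compact, so solutions are bounded and forward complete. LaSalle's invariance principle then confines $\omega$-limit sets to the equilibria ${(q^*,0)}$ with $q^*$ critical for $P$; by Morse-ness these are isolated, and at each $q^*\neq 0_Q$ the nonzero Morse index forces the linearization to have an eigenvalue with positive real part, hence a stable manifold of positive codimension. Therefore the basin of attraction of $0_{TQ}$ is open, dense, and of full measure -- this is the ``total energy'' almost-global stabilization theorem of~\cite{Koditschek1989} (see also~\cite{BulloAndLewis2004}). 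At $0_{TQ}$, in normal coordinates the linearized error dynamics take the form ${\ddot\epsilon + D\dot\epsilon + K\epsilon = 0}$ with $D,K$ positive-definite with respect to ${\kappa|_{0_Q}}$ ($K$ coming from the Hessian of $P$ at its nondegenerate minimum, $D$ from the metric $\nu$), which is exponentially stable; this gives the asserted local exponential convergence of the tracking error.

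Finally I would transfer to the physical state. At each ${t_0}$ the map ${\dot q(t_0)\mapsto\dot e(t_0)}$ is a diffeomorphism ${\Xi_{t_0}}$ of $TQ$ -- it covers the base diffeomorphism ${q\mapsto\Phi_{g_d(t_0)^{-1}}(q)}$ and is affine on each fibre with invertible linear part ${\diff\Phi_{g_d(t_0)^{-1}}}$ -- so ${S_{t_0}:=\Xi_{t_0}^{-1}(B)}$, where $B$ is the (time-independent, since the error dynamics are autonomous) basin of $0_{TQ}$, is again residual and of full measure, and Proposition~\ref{error_function_does_its_job} gives ${\Xi_{t_0}(\dot q_d(t_0)) = 0_{TQ}}$. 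For ${\dot q(t_0)\in S_{t_0}}$ we get ${\dot e(t)\to 0_{TQ}}$, hence ${e(t)\to 0_Q}$ and ${\|\dot e(t)\|\to 0}$. Writing the body velocity of $g_d$ as ${\xi_d(t)}$ (bounded, by hypothesis) and its fundamental vector field on $Q$ as ${\xi_d(t)^Q}$, the relation ${q = \Phi_{g_d}(e)}$ gives ${\dot q(t) = \diff\Phi_{g_d(t)}\big(\xi_d(t)^Q(e(t)) + \dot e(t)\big)}$ and ${\dot q_d(t) = \diff\Phi_{g_d(t)}\big(\xi_d(t)^Q(0_Q)\big)}$; since ${\Phi_{g_d(t)}}$ is a $\kappa$-isometry, ${\diff\Phi_{g_d(t)}}$ is an isometry of the Sasaki metric, so ${\dist_{TQ}\big(\dot q(t),\dot q_d(t)\big) = \dist_{TQ}\big(\xi_d(t)^Q(e(t)) + \dot e(t),\,\xi_d(t)^Q(0_Q)\big)}$. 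The right-hand side is continuous in ${(\xi_d(t),\dot e(t))}$ and vanishes whenever ${\dot e(t) = 0_{TQ}}$, so since ${\{\xi_d(t)\}}$ has compact closure in $\mathfrak g$ and ${\dot e(t)\to 0_{TQ}}$, it tends to $0$ (and decays exponentially near the reference, via a uniform Lipschitz estimate over that compact set of velocities). This is exactly where bounded body velocity is used; the same mechanism -- or simply ${\dist_Q(q,q_d) = \dist_Q(e,0_Q)}$ -- handles the configurations.
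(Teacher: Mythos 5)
Your proof is correct, and it takes a genuinely different route to the error dynamics \eqref{covariant_acceleration_of_error} from the one in the paper. Where the paper pulls back $q(s)$ into a two-parameter ``family of curves'' ${(r,s)\mapsto\Phi(g_d(r)^{-1},q(s))}$ and invokes a coordinate-computed diagonal identity for ${\nabla_{\dot e}\dot e}$ in terms of the partial covariant derivatives ${\Diff_r\partial_r,\Diff_r\partial_s,\Diff_s\partial_s}$, you instead view ${E:\mathbb{R}\times Q\to Q}$ as a fixed smooth map and apply the chain rule for covariant acceleration of a composition, ${\nabla_{\dot e}\dot e=\diff E(\nabla^{\mathbb{R}\times Q}_{\dot c}\dot c)+(\nabla\diff E)(\dot c,\dot c)}$, expanding the symmetric Hessian tensor ${\nabla\diff E}$ bilinearly over ${\dot c=(1,0)+(0,\dot q)}$. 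This is cleaner in an appealing way: the vanishing of the ${(0,\dot q)}$--${(0,\dot q)}$ term becomes a one-line consequence of the fact that isometries are affine (zero Hessian), whereas the paper only exploits the isometry property of ${\Phi_{g_d(t)^{-1}}}$ for the ${\Diff_s\partial_s}$ term and handles the rest by direct computation; the paper's route is more elementary (it only needs the local-coordinate diagonal lemma) and does not require the reader to know the second-fundamental-form formalism for maps. Both produce exactly \eqref{covariant_acceleration_of_error}, and your reduction to the autonomous dissipative error system is the same as the paper's. For the stability argument you unpack the content of \cite[Thm.\ 2]{Koditschek1989} (energy dissipation, LaSalle confinement to the critical set of $P$, positive-codimension stable manifolds at the saddles via the Morse index, and the damped Hessian linearization at $0_{TQ}$ for exponential rates) rather than citing it, which is fine. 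Finally, your transfer back to ${\dist_{TQ}(\dot q,\dot q_d)}$ differs from the paper's: instead of invoking the Epstein--Marden comparison between Sasaki distance and parallel transport, you use that ${\diff\Phi_{g_d(t)}}$ is a Sasaki isometry and then run a continuity-plus-compactness argument over the precompact family ${\{\xi_d(t)\}}$; this is correct and gives the same use of the bounded-body-velocity hypothesis. One small caution: to justify claiming \emph{exponential} decay of ${\dist_{TQ}(\dot q,\dot q_d)}$ (as opposed to just of the error state), you would need a locally uniform Lipschitz bound of the Sasaki-distance expression in ${\dot e}$ near ${0_{TQ}}$, which you gesture at but do not spell out; the theorem as stated only asserts exponential convergence of the \emph{tracking error}, so this is not strictly required.
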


\begin{remark}
	We use the ``dummy variable'' $r$ 
	in  $e \raisebox{3.5pt}{${\scriptstyle q}$}$ and $X \raisebox{3.5pt}{${\scriptstyle \dot{q}}$}$ 
	to emphasize that $q$ and $\dot{q}$ are held fixed as $r$ varies (despite their dependence on $t$). This allows us to rigorously and intrinsically express \eqref{mechanical_tracking_control} using only the standard formalism for covariant differentiation along curves. It follows from \cite[Prop. 4.26]{RiemannianLee} that \eqref{mechanical_tracking_control} depends only on ${t \in \mathbb{R}}$ and ${\dot{q} \in TQ}$.
		Additionally, although the control policy \eqref{mechanical_tracking_control} requires \textit{choosing} a certain lift $g_d$, we will show that the qualitative closed-loop stability properties are ultimately independent of this choice.
	\end{remark}
	
	\begin{proof}[Proof of Theorem \ref{mechanical_tracking_control_theorem}]
		The configuration error \eqref{error_curve} is
		the ``diagonal'' curve of the smooth ``family of curves'' given by
		\begin{equation}
			E : \mathbb{R}^2 \to Q, \,  (r,s) \mapsto \Phi\big(g_d(r)^{-1}, q(s)\big),
		\end{equation}
		in the sense that ${e(t) = E(t,t)}$. We will use this observation to express the covariant derivative $\nabla_{\dot{e}} \dot{e}$ in terms of the respective contributions of the reference and actual trajectories. 
		
		Following \cite[Ch. 6]{RiemannianLee}, the family of curves $E$ has \textit{transverse} and \textit{main} curves at each $s$ and $r$, 
		given respectively by
${E^s : r \mapsto E(r,s)}$ and ${E_r : s \mapsto E(r,s).}$
		Let $\mathfrak{X}(E)$ be the set of vector fields over $E$, \textit{i.e.}, maps ${(r,s) \mapsto V(r,s)}$ such that ${V(r,s) \in T_{E(r,s)} Q}$. For example, two such vector fields are the \textit{transverse} and \textit{main velocity},
		${\partial_r E : (r,s) \mapsto \dot{E}^s (r)}$ and 
		${\partial_s E : (r,s) \mapsto \dot{E}_r (s)}$.
		For any ${V \in \mathfrak{X}(E)}$, we may compute its ``partial'' covariant derivative along the transverse or main direction, operations denoted respectively by ${\Diff_r, \, \Diff_s : \mathfrak{X}(E) \to \mathfrak{X}(E)}$. This operation is defined by restricting the vector field over the family of curves to a vector field along each transverse (resp. main) curve and computing the usual covariant derivative along that curve. 
		
		Covariant differentiation is a purely local operation \cite[Lemma 4.1]{RiemannianLee}, and thus a local coordinate calculation can be used to show that because $e$ is the diagonal curve of $E$, 
		\begin{align}
			\big(
			\nabla_{\dot{e}} \dot{e}
			\big)(t)
			&=
			\big(
			\Diff_r \partial_r \mathit{E}
			+ 2 \Diff_r \partial_s \mathit{E}
			+ \Diff_s \partial_s \mathit{E}
			\big)(t,t).
			\label{application_of_diagonal_lemma}
		\end{align}
		where we have evaluated the right-hand side at $(r,s) = (t,t)$.
		We now aim to compute the terms on the right-hand side of \eqref{application_of_diagonal_lemma}.
		Observing that 
		${E^s (r) = e^{q(s)}}(r)$
		and also that
		\begin{align}
	\partial_s E(r,s) = 
	\diff\Phi_{g_d(r)^{-1}} \big(\dot{q}(s)\big) = X^{\dot{q}(s)}(r),
	\label{partial_s_E}
\end{align}
		we may verify that
		\begin{align}
			\big(\Diff_r \partial_r E\big) (t,t) &= \big(\nabla_{\dot{E}^t} {\dot{E}^t}\big)(t)
			= \big(\nabla_{\dot{e}^q} {\dot{e}^q}\big)(t),
			\label{D_r_p_r_E}
			\\
			\big(
			\Diff_r 
			\partial_s \mathit{E}
			\big)(t,t) 
			&=
			(\nabla_{\dot{e}^{q}} 
			X^{\dot{q}}
			)(t).
			\label{D_r_p_s_E}
		\end{align}	
		By \cite[Thm. 5.70]{BulloAndLewis2004}, 
		the $\Phi$-invariance of $\kappa$ implies that 
		$\nabla$ satisfies 
		$
			{\nabla_{\diff \Phi_g(X)} {\diff \Phi_g(Y)} =  \diff \Phi_g \big(\nabla_XY\big)}$ 
		for all ${X,Y \in \mathfrak{X}(Q)}$ and ${g \in G}$. %
		From this fact, \eqref{mechanical_system_dynamics}, and \eqref{partial_s_E}, it follows that
		\begin{align}
			(\Diff_s \partial_s E) (t,t) = \big(\nabla_{\dot{E}_t} {\dot{E}_t}\big)(t)
			&=
			\diff \Phi_{g_d(t)}^{-1}  
			\circ \kappa^\sharp\big(f_q(t)\big),
			\label{D_s_p_s_E}
		\end{align}
		where ${t \mapsto f_q(t)}$ is the input force signal corresponding to ${t \mapsto q(t)}$.
		Substituting these results into \eqref{application_of_diagonal_lemma}, we obtain
		\begin{equation}
			\begin{aligned}
				\hspace{-4pt}(
				\nabla_{\dot{e}} \dot{e}
				)(t)
				=
				\nabla_{\dot{e}^{q}}
				( \dot{e}^{q} + 2 \,	X^{\dot{q}}
				)
				(t)
				+ \diff \Phi_{g_d(t)}^{-1}  
				\circ \kappa^\sharp\big(f_q(t)\big).
				\label{covariant_acceleration_of_error}
			\end{aligned}
		\end{equation}
		
		Substituting \eqref{mechanical_tracking_control} into \eqref{covariant_acceleration_of_error} and using the $\Phi$-invariance of $\kappa$ to verify that 
		${(\diff \Phi_{g} \circ \kappa^\sharp)^{-1} = \diff \Phi_g^* \circ \kappa^\flat}$  for any ${g \in G}$ ultimately  yields the autonomous tracking error dynamics
		\begin{align}
			\nabla_{\dot{e}} \dot{e} = 
- \kappa^\sharp \big(\diff P (e) + \nu^\flat(\dot{e}) \big) ,
			\label{mechanical_error_dynamics}
		\end{align}
		which is a mechanical system with strict dissipation and a navigation function potential. Thus,
		it follows from \cite[Thm. 2]{Koditschek1989} that 
		there exists an open dense set ${S \subseteq TQ}$ of full measure, such that for any ${t_0 \in \mathbb{R}}$ (since \eqref{mechanical_error_dynamics} is autonomous), if 
${\dot{e}(t_0) \in S}$, then ${\dot{e}(t) \to 0_{TQ}}$ as ${t \to \infty}$, and local exponential stability $0_{TQ}$ for \eqref{mechanical_error_dynamics} follows from \cite[Thm. 6.45]{BulloAndLewis2004}.

Finally, it follows from \cite[II.A.2]{EpsteinMarden2006} that ${\dist_{TQ}(
u_p,
v_r
) 
\leq }$ ${
\dist_Q(p,r)
+ 
\big|\big|\tau_{\gamma}(u_p) - {v}_r\big|\big|_\kappa}$ for sufficiently close points ${p,r \in Q}$, where $\tau_\gamma$ denotes parallel transport along the shortest geodesic from $p$ to $r$. The $\Phi$-invariance of $\kappa$, the boundedness of the reference body velocity, and \eqref{q_dot_computation}-\eqref{q_d_dot_computation} can then be used 
with this inequality
to verify that ${\dot{e}(t) \to 0_{TQ}}$ implies ${\dist_{TQ}\big(\dot{q}(t),\dot{q}_d(t)\big) \to 0}$%
.  Since $\diff \Phi_{g}$ is diffeomorphism, \eqref{q_dot_computation} implies (for each ${t_0 \in \mathbb{R}}$) the existence an open dense set ${S_{t_0} \subseteq TQ}$ of full measure such that if ${\dot{q}(t_0) \in S_{t_0}}$, then 
${\dist_{TQ}\big(\dot{q}(t),\dot{q}_d(t)\big) \to 0}$ as ${t \to \infty}$, proving the claim.
	\end{proof}
			
		\section{Specialization to $n$-Spheres and Lie Groups}
	
We now specialize the controller proposed in Theorem \ref{mechanical_tracking_control_theorem} to the $n$-sphere and Lie group settings, in each case obtaining a concise and explicit expression for the control policy \eqref{mechanical_tracking_control}.

		\begin{corollary}[Almost Global Asymptotic Tracking on $\mathbb{S}^n$]
	For any ${n \in \mathbb{N}}$, consider a fully-actuated mechanical system on  ${(\mathbb{S}^n,\Psi,\rho})$, 
		a reference trajectory ${q_d : \mathbb{R} \to \mathbb{S}^n}$, 
	and 
	any ${0_{\mathbb{S}^n}}$-lift ${R_d : \mathbb{R} \to SO(n\hspace{-1.5pt}+\hspace{-1.5pt}1)}$ of 
$q_d$ with bounded body velocity.
	Then for any ${k_P, k_D > 0}$,
	the control policy
	\begin{equation}
		\begin{aligned}
			\hspace{-6pt} f_q(t,\dot{q})   =
			-k_P &  \, {q_d}^{\textrm{\normalfont T}} \big(q \,  q^\textrm{\normalfont T} - \mathds{1} \big)
			- k_D \big(\dot{q}^{\textrm{\normalfont T}} + q^{\textrm{\normalfont T}} {\dot{R}_d} \, {R_d}^{\textrm{\normalfont T}}\big)
			\\&  
			+ \big(
			q^{\textrm{\normalfont T}} {\ddot{R}_d} + 2 \, {\dot{q}}^{\textrm{\normalfont T}} \dot{R}_d \big)
			\,
			{R_d}^{\textrm{\normalfont T}} 
			\big(q \,  q^\textrm{\normalfont T} - \mathds{1} \big)
			\label{sphere_mechanical_tracking_controller}
		\end{aligned}
	\end{equation}
	achieves almost global asymptotic tracking of the reference and local exponential convergence of the tracking error.
	\label{sphere_tracking_corollary}
\end{corollary}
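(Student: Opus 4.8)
The plan is to derive \eqref{sphere_mechanical_tracking_controller} by instantiating the general control policy \eqref{mechanical_tracking_control} of Theorem~\ref{mechanical_tracking_control_theorem} on the homogeneous Riemannian manifold $(\mathbb{S}^n,\Psi,\rho)$ of Example~\ref{example:sphere}, taking the origin $0_{\mathbb{S}^n}=e_n$, the navigation function $P=P_{\mathbb{S}^n}$ of \eqref{sphere_navigation} (with the prescribed gain $k_P$), and the metric $\nu=k_D\,\rho$, which is a Riemannian metric on $\mathbb{S}^n$ because $k_D>0$. Once \eqref{mechanical_tracking_control} has been shown to reduce to \eqref{sphere_mechanical_tracking_controller} under these choices, the claim follows at once from Theorem~\ref{mechanical_tracking_control_theorem}, whose hypotheses are met: $(\mathbb{S}^n,\Psi,\rho)$ is a homogeneous Riemannian manifold (Example~\ref{example:sphere}), $P_{\mathbb{S}^n}$ is a $0_{\mathbb{S}^n}$-navigation function (as noted after \eqref{sphere_navigation}), and $R_d$ is by assumption a $0_{\mathbb{S}^n}$-lift of $q_d$ with bounded body velocity. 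Thus the whole proof amounts to this specialization computation.

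To carry it out, I would first assemble the concrete forms of the objects in \eqref{mechanical_tracking_control} under the embedding $\mathbb{S}^n\subset\mathbb{R}^{n+1}$. Since $\Psi_R$ is the restriction of the linear map $q\mapsto Rq$, its derivative $\diff\Psi_R$ is the restriction of left multiplication by $R$, so under $\rho^\flat:v_q\mapsto v_q^{\textrm{T}}$ its dual $\diff\Psi_R^*$ is right multiplication by $R$ on row-vector covectors; we apply it with $R=g_d^{-1}=R_d^{-1}=R_d^{\textrm{T}}$. With $g_d=R_d$, the configuration error is $e=R_d^{\textrm{T}}q$ (Example~\ref{kinematic_n_sphere}), so the curve and vector field of Theorem~\ref{mechanical_tracking_control_theorem} become $e^q:r\mapsto R_d(r)^{\textrm{T}}q$ and $X^{\dot q}:r\mapsto R_d(r)^{\textrm{T}}\dot q$, while $q_d=R_d\,e_n$ because $R_d$ is a $0_{\mathbb{S}^n}$-lift. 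Finally, since $\rho$ is the metric induced by the embedding, the Levi-Civita connection is given by orthogonal projection: for a vector field $V$ along a curve $\gamma$ in $\mathbb{S}^n$, $\nabla_{\dot\gamma}V=(\mathds{1}-\gamma\gamma^{\textrm{T}})\dot V$.

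I would then evaluate the three constituents of \eqref{mechanical_tracking_control} separately. Potential term: $\diff P_{\mathbb{S}^n}(e)=-k_P\,e_n^{\textrm{T}}$, so $-\diff\Psi^*_{R_d^{-1}}\big(\diff P(e)\big)=k_P\,e_n^{\textrm{T}}R_d^{\textrm{T}}=k_P\,q_d^{\textrm{T}}$, which as a covector on $T_q\mathbb{S}^n$ coincides with its canonical ($\rho$-orthogonal) representative $-k_P\,q_d^{\textrm{T}}(qq^{\textrm{T}}-\mathds{1})$, the first term. Dissipation term: $\nu^\flat(\dot e)=k_D\,\dot e^{\textrm{T}}$ with $\dot e=\dot R_d^{\textrm{T}}q+R_d^{\textrm{T}}\dot q$, so $-\diff\Psi^*_{R_d^{-1}}\big(\nu^\flat(\dot e)\big)=-k_D\,\dot e^{\textrm{T}}R_d^{\textrm{T}}=-k_D\big(\dot q^{\textrm{T}}+q^{\textrm{T}}\dot R_d R_d^{\textrm{T}}\big)$ after collapsing $R_dR_d^{\textrm{T}}=\mathds{1}$, the second term. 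Acceleration term: $\dot e^q+2X^{\dot q}$ is the vector field $r\mapsto\dot R_d(r)^{\textrm{T}}q+2R_d(r)^{\textrm{T}}\dot q$ along $e^q$, so differentiating in $r$ and projecting at $r=t$ gives $\nabla_{\dot e^q}(\dot e^q+2X^{\dot q})(t)=(\mathds{1}-ee^{\textrm{T}})\big(\ddot R_d^{\textrm{T}}q+2\dot R_d^{\textrm{T}}\dot q\big)$; applying $\kappa^\flat$ (transposition), then $-\diff\Psi^*_{R_d^{-1}}$, and using $(\mathds{1}-ee^{\textrm{T}})R_d^{\textrm{T}}=R_d^{\textrm{T}}(\mathds{1}-qq^{\textrm{T}})$ yields $\big(q^{\textrm{T}}\ddot R_d+2\dot q^{\textrm{T}}\dot R_d\big)R_d^{\textrm{T}}(qq^{\textrm{T}}-\mathds{1})$, the third term. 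Summing the three reproduces \eqref{sphere_mechanical_tracking_controller}, and Theorem~\ref{mechanical_tracking_control_theorem} finishes the argument.

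I expect the only real difficulty to be bookkeeping: keeping straight which tangent or cotangent space each quantity inhabits, correctly dualizing $\diff\Psi_R$, realizing covariant differentiation as the orthogonal projection $\mathds{1}-\gamma\gamma^{\textrm{T}}$, and repeatedly collapsing products with $R_dR_d^{\textrm{T}}=\mathds{1}$. One subtlety to flag in the write-up is that the covector produced by \eqref{mechanical_tracking_control} and the row-vector expressions in \eqref{sphere_mechanical_tracking_controller} need only agree as elements of $T_q^*\mathbb{S}^n$, i.e.\ modulo multiples of $q^{\textrm{T}}$ (which annihilate $T_q\mathbb{S}^n$); the factors $(qq^{\textrm{T}}-\mathds{1})$ in \eqref{sphere_mechanical_tracking_controller} simply display the $\rho$-orthogonal representative. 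Beyond that, the computation is routine.
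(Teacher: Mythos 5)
Your proof is correct and follows essentially the same route as the paper's: instantiate Theorem~\ref{mechanical_tracking_control_theorem} with $P_{\mathbb{S}^n}$, $\nu = k_D\,\rho$, the error map $e = R_d^{\mathrm{T}}q$, the projection formula $\nabla_{\dot\gamma}V = (\mathds{1}-\gamma\gamma^{\mathrm{T}})\dot V$ for the Levi-Civita connection, and the identities $\diff\Psi_{R_d^{-1}}^*(\omega) = \omega R_d^{\mathrm{T}}$ and $(\mathds{1}-ee^{\mathrm{T}})R_d^{\mathrm{T}} = R_d^{\mathrm{T}}(\mathds{1}-qq^{\mathrm{T}})$. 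The only cosmetic difference is that you carry the unprojected ambient covector $-k_P e_n^{\mathrm{T}}$ and observe its agreement with the $\rho$-orthogonal representative at the end, whereas the paper works with the projected representative $-k_P e_n^{\mathrm{T}}(\mathds{1}-ee^{\mathrm{T}})$ throughout; both yield the same element of $T_q^*\mathbb{S}^n$ and hence the same control law.
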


\begin{proof}
	It will suffice to apply Theorem \ref{mechanical_tracking_control_theorem} with the configuration error \eqref{sphere_error_map}, the navigation function \eqref{sphere_navigation}, and the dissipation metric $\nu = k_D \, \rho$.
	To show this, we compute
	\begin{align}
		\diff P(e) &=(-k_P {0_{\mathbb{S}^n}})^{\textrm{T}} (\mathds{1} - e\, e^\textrm{\normalfont T} \hspace{1pt} ),
		\\
		\nu^\flat(\dot{e}) &= 
		k_D (
		q^\textrm{T} \, \dot{R}_d + \dot{q}^{\textrm{T}} \, R_d 
		).		
	\end{align}
	Moreover, the covariant derivative along $\gamma$ of $X \in \mathfrak{X}(\gamma)$ is
	\begin{equation}
		\nabla_{\dot{\gamma}} X : r \mapsto \big(\mathds{1} - \gamma(r) \, \gamma(r)^\textrm{T}\big) \, \dot{X}(r),
	\end{equation}
	and we have ${e^q(r) = R_d(r)^{\textrm{T}} \, q}$ and ${X^{\dot{q}}(r) = R_d(r)^{\textrm{T}} \, \dot{q}}$. Thus, 
	\begin{align}
		\nabla_{\dot{e}^q} (
		\dot{e}^q + 2 \, X^{\dot{q}}			
		) (t) 
		=
		\big(\mathds{1} - e\, e^\textrm{\normalfont T} \hspace{1pt} \big)
		\big(
		\ddot{R}_d{}^{\textrm{T}} \, q + 2 \, \dot{R}_d{}^{\textrm{T}} \, \dot{q}			
		\big).
	\end{align}
	Finally, noting that ${\diff \Psi^*_R(\omega) = \rho^\flat \circ \diff \Psi_{R}^{-1} \circ \rho^\sharp(\omega)  = \omega \, {R}^{\textrm{T} }}$ due to the $\Psi$-invariance of $\rho$, and also verifying that ${\big(\mathds{1} - e\, e^\textrm{\normalfont T} \hspace{1pt} \big)
		=
		{R_d}^{\textrm{T}}
		\big(\mathds{1} -  q \, q^{\textrm{T}} \big) {R_d}}, 
	$ we may compute \eqref{mechanical_tracking_control} using all the previous calculations, simplifying to yield exactly \eqref{sphere_mechanical_tracking_controller}.
	Thus, the claim follows immediately by Theorem \ref{mechanical_tracking_control_theorem}.
\end{proof}
For any function ${P : G \to \mathbb{R}}$
on a Lie group $G$,  
we define a map ${\zeta_P : G \to \mathfrak{g}^*}, \ {g \mapsto \diff L_g^* \circ \diff P (g)}$. 
Also, we denote the body velocities of trajectories $g$, $g_d$, and $e$ by by $\xi$, $\xi_d$, and $\xi_e$ respectively.
The control policy we now recover uses the same configuration error as in \cite[Thm. 11.29]{BulloAndLewis2004}, although our different feedforward terms lead to almost global tracking. In that sense, our result is qualitatively more similar to \cite{Maithripala2006}, but they use a different configuration error (\textit{i.e.}, $e = g_d \, g^{-1}$).

\begin{corollary}[Almost Global Asymptotic Tracking on a Lie Group]
	For any Lie group $G$, 
	consider a fully-actuated mechanical system on ${(G,L,\kappa_{\mathbb{I}})}$
	and a reference trajectory ${g_d : \mathbb{R} \to G}$ with bounded body velocity.
	Let ${P}$ be a ${\mathds{1}}$-navigation function on $G$
	and
	$\mathbb{D}$ be an inner product on $\mathfrak{g}$. Then for the virtual control ${\tau = \diff L_{g}^* (f_g)}$,
the control policy
		\begin{equation}
			\begin{aligned}
				\hspace{-7pt}
				\tau(t,g,\xi)
				\hspace{-1pt}
				&= 
				\hspace{-1pt}
				- \zeta_P (e) 
				- \mathbb{D}^\flat \xi_e \ +
				\\  & \quad  \ \,
				\mathbb{I}^\flat (
				\hspace{-1pt}
				\Ad_{e}^{-1} \hspace{-1pt} \dot{\xi}_d + 
				[\xi,\xi_e]	
				) + 
				\ad^*_{\xi_e} \hspace{-1pt} \mathbb{I}^\flat \xi_e - 
				\ad^*_{\xi} \hspace{-1pt} \mathbb{I}^\flat \xi 
				\hspace{-2pt}
				\label{lie_group_tracking_control}
			\end{aligned}
		\end{equation}	
		achieves almost global asymptotic tracking of the reference and local exponential convergence of the tracking error.
		\label{group_tracking_corollary}
	\end{corollary}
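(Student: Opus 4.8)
The plan is to invoke Theorem~\ref{mechanical_tracking_control_theorem} with the configuration error \eqref{lie_group_error_map}, the given $\mathds{1}$-navigation function $P$, and the dissipation metric $\nu$ that is left-invariant with $\nu(v_g,w_g) = \mathbb{D}\big(\diff L_{g^{-1}}(v_g),\diff L_{g^{-1}}(w_g)\big)$, and then to verify that, under the substitution $\tau = \diff L_g^*(f_g)$, the general policy \eqref{mechanical_tracking_control} collapses to \eqref{lie_group_tracking_control}. The theorem's hypotheses hold: by Example~\ref{example:group}, $(G,L,\kappa_{\mathbb{I}})$ is a homogeneous Riemannian manifold with trivial stabilizer whose $\mathds{1}$-lift of $g_d$ is $g_d$ itself, with body velocity $\xi_d$ bounded by assumption, and $\nu$ is a Riemannian metric. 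So only the simplification remains.

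First I would observe that $\Phi_{g_d^{-1}} = L_{g_d^{-1}}$ and $L_{g_d^{-1}}\circ L_g = L_{g_d^{-1}g} = L_e$, so $\diff L_g^*\circ\diff\Phi^*_{g_d^{-1}} = \diff L_e^*$ and hence $\tau = -\diff L_e^*\big(\diff P(e) + \nu^\flat(\dot e) + \kappa_{\mathbb{I}}^\flat\big(\nabla_{\dot e^q}(\dot e^q + 2\,X^{\dot q})\big)(t)\big)$. The first summand contributes $-\diff L_e^*(\diff P(e)) = -\zeta_P(e)$ by the definition of $\zeta_P$. For the second, left-invariance of $\nu$ gives $\nu^\flat(\dot e) = \diff L_{e^{-1}}^*\big(\mathbb{D}^\flat(\xi_e)\big)$ with $\xi_e = \diff L_{e^{-1}}(\dot e)$, so that its contribution is $-\diff L_e^*(\nu^\flat(\dot e)) = -\mathbb{D}^\flat\xi_e$. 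The only nontrivial term is the covariant-derivative term, which I would compute by passing to the left-trivialization.

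For that term, the key ingredients are the kinematic identity $\xi = \Ad_e^{-1}\xi_d + \xi_e$ (which follows from $g = g_d\,e$ and the product rule for left translation) and its time derivative $\dot\xi = \Ad_e^{-1}\dot\xi_d + [\xi,\xi_e] + \dot\xi_e$, obtained using $\tfrac{d}{dt}\Ad_e^{-1} = -\ad_{\xi_e}\!\circ\Ad_e^{-1}$. Next I would use the Euler--Poincar\'e form of the dynamics of a left-invariant system (see \cite{BulloAndLewis2004}): $\nabla_{\dot g}\dot g = \kappa_{\mathbb{I}}^\sharp(f_g)$ is equivalent to $\mathbb{I}^\flat\dot\xi = \ad_\xi^*\mathbb{I}^\flat\xi + \tau$, and more generally $\nabla_{\dot e}\dot e = \kappa_{\mathbb{I}}^\sharp(\omega)$ reads $\mathbb{I}^\flat\dot\xi_e = \ad_{\xi_e}^*\mathbb{I}^\flat\xi_e + \diff L_e^*(\omega)$. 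Since the proof of Theorem~\ref{mechanical_tracking_control_theorem} shows, via \eqref{covariant_acceleration_of_error}, that \eqref{mechanical_tracking_control} is the \emph{unique} input producing the closed-loop error dynamics \eqref{mechanical_error_dynamics} --- which in the left-trivialization read $\mathbb{I}^\flat\dot\xi_e = \ad_{\xi_e}^*\mathbb{I}^\flat\xi_e - \mathbb{D}^\flat\xi_e - \zeta_P(e)$ --- it suffices to substitute the derivative identity into this relation and solve for $\tau$ using the open-loop relation $\mathbb{I}^\flat\dot\xi = \ad_\xi^*\mathbb{I}^\flat\xi + \tau$; the resulting expression is precisely \eqref{lie_group_tracking_control}, whereupon the claimed almost global asymptotic tracking and local exponential convergence of the error follow directly from Theorem~\ref{mechanical_tracking_control_theorem}.

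I expect the main obstacle to be bookkeeping rather than anything conceptual: one must fix and consistently apply the conventions for $\Ad$, $\ad$, and $\ad^*$, keep the body and spatial velocities of $g$, $g_d$, and $e$ straight, and pin down the signs in $\tfrac{d}{dt}\Ad_e^{-1}$ and in the Euler--Poincar\'e equation, since a single sign error anywhere propagates into the final formula. A secondary point requiring care is the passage to the left-trivialization for the term $\nabla_{\dot e^q}(\dot e^q + 2\,X^{\dot q})$: one should check that at the diagonal the curve $e^q$ has body velocity $-\Ad_e^{-1}\xi_d$ and the field $X^{\dot q}$ has body representative $\xi$, consistent with the identities $\dot e = \dot e^q + X^{\dot q}$ and $\xi_e = \xi - \Ad_e^{-1}\xi_d$.
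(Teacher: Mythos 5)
Your proof is correct, but it takes a genuinely different route from the paper's. The paper directly specializes the general policy \eqref{mechanical_tracking_control} by computing the curves $\dot e^g(r) = \diff L_{e^g(r)}\big({-}\Ad_{e^g(r)}^{-1}\xi_d(r)\big)$ and $X^{\dot g}(r) = \diff L_{e^g(r)}(\xi)$, and then evaluating $\nabla_{\dot e^g}(\dot e^g + 2X^{\dot g})$ via the explicit formula \cite[Thm. 5.40]{BulloAndLewis2004} for the left-invariant Levi-Civita connection, which involves the bilinear map $\nabla_{\mathbb{I}}(\upsilon,\eta) = \tfrac12[\upsilon,\eta] - \tfrac12\mathbb{I}^\sharp(\ad_\upsilon^*\mathbb{I}^\flat\eta + \ad_\eta^*\mathbb{I}^\flat\upsilon)$; the final expression then falls out after trivialization. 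You instead sidestep the covariant-derivative computation entirely: you rewrite both the open-loop dynamics and the target error dynamics \eqref{mechanical_error_dynamics} in Euler--Poincar\'e form, differentiate the kinematic relation $\xi = \Ad_e^{-1}\xi_d + \xi_e$ using $\tfrac{\diff}{\diff t}\Ad_e^{-1} = -\ad_{\xi_e}\!\circ \Ad_e^{-1}$ to get $\dot\xi = \Ad_e^{-1}\dot\xi_d + [\xi,\xi_e] + \dot\xi_e$, and then solve algebraically for $\tau$ --- relying on the observation, visible in \eqref{covariant_acceleration_of_error}, that the map $f_q \mapsto \nabla_{\dot e}\dot e$ is affine with invertible linear part $\diff\Phi_{g_d}^{-1}\circ\kappa^\sharp$, so the input realizing \eqref{mechanical_error_dynamics} is unique and must coincide with \eqref{mechanical_tracking_control}. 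The trade-off is that the paper's proof is a self-contained forward computation requiring only the connection formula, while yours is shorter on bookkeeping (no $\nabla_{\mathbb{I}}$ to expand) but leans on the uniqueness-of-input observation and on already knowing the Euler--Poincar\'e form of the left-invariant dynamics; it also makes the role of the error kinematics more transparent, at the cost of being a reverse-engineering argument rather than a direct specialization of \eqref{mechanical_tracking_control}. Your sanity checks at the end (body velocity of $e^q$ is $-\Ad_e^{-1}\xi_d$, body representative of $X^{\dot q}$ is $\xi$, and $\xi_e = \xi - \Ad_e^{-1}\xi_d$) match the paper's intermediate computations exactly, which confirms the two derivations are consistent.
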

	
	\begin{proof}
		It will suffice to apply Theorem \ref{mechanical_tracking_control_theorem} with the configuration error \eqref{lie_group_error_map}, the given navigation function, and the metric $\nu$ induced by $\mathbb{D}$ as in \eqref{invariant_metric_on_lie_group}.
		Since $e^g(r) = L_{g_d(r)}^{-1}(g)$,
		we have
		\begin{align}
			\dot{e}^g(r) 
			&= \diff L_{e^g(r)} \big( -\Ad_{e^g(r)}^{-1}  \big(\xi_d(r)\big)  \big)
			, \\ 
			X^{\dot{g}}(r) = \diff L_{g_d(r)}^{-1}(\dot{g})
			&= \diff L_{e^g(r)} \big(\xi\big),
		\end{align}
		so that altogether, we have
		\begin{align}
			(\dot{e}^g + 2 X^{\dot{g}})(r)
			&= 
			\diff L_{e^g(r)} \big(
			2 \, \xi - \Ad_{e^g(r)}^{-1} \big(\xi_d(r)\big)
			\big).
		\end{align}
		The result \cite[Thm. 5.40]{BulloAndLewis2004} implies that
		that for a curve $\gamma$ in $G$, a vector field ${X \in \mathfrak{X}(\gamma)}$, and curves $\upsilon,\eta$ in $\mathfrak{g}$ for which ${\dot{\gamma}(r) = \diff L_{\gamma(r)}\big(\upsilon (r)\big)}$ and ${X(r) = \diff L_{\gamma(r)} \big( \eta(r) \big)}$, we have
		\begin{align}
			\big(\nabla_{\dot{\gamma}} X\big)(r) 
			= 
			\diff L_{\gamma(r)} \big(
			\dot{\eta}(r) + \nabla_{\mathbb{I}} \big({\upsilon(r)}, \eta(r)\big)
			\big),
			\label{lie_group_covariant_derivative}
		\end{align}
		where ${\nabla_{\mathbb{I}} : \mathfrak{g} \times \mathfrak{g} \to \mathfrak{g}}$ is the bilinear map given by 
		\begin{equation}
			(\upsilon,\eta) \mapsto 
			\tfrac{1}{2} [\upsilon,\eta] 
			- \tfrac{1}{2}\mathbb{I}^{\sharp}\big(
			\ad_{\upsilon}^* \mathbb{I}^\flat \, \eta + \ad_{\eta}^* \mathbb{I}^\flat \, \upsilon
			\big).
			\label{nabla_I_bilinear_map}
		\end{equation}
		Thus, after verifying that $\xi_e = \xi - \Ad_{e}^{-1} \xi_d$
		and computing
		\begin{align}
			\tfrac{\diff}{\diff r} \big(2 \, \xi &- \Ad_{e^g(r)}^{-1}\xi_d(r)\big)
			=
			-\Ad_{e^g(r)}^{-1} \dot{\xi}_d(r),
		\end{align}
		we 
		use \eqref{lie_group_covariant_derivative} and the bilinearity of \eqref{nabla_I_bilinear_map}
			to conclude that
		\begin{equation}
			\begin{aligned}
				\label{evaluate_lie_group_covariant_term}
				\hspace{-4pt}
				&
				\nabla_{\dot{e}^g}  (\dot{e}^g + 2 X^{\dot{g}})(t)
				\hspace{-2pt}
				\\
				\hspace{-2pt}
				&=\hspace{-2pt}
				-\diff L_{e} \big( \hspace{-2pt}
				\Ad_{e}^{-1} \hspace{-1.5pt} \dot{\xi}_d \hspace{-1pt} + \hspace{-1pt} 
				[\xi,\xi_e] + \mathbb{I}^\sharp(\ad_{\xi_e}^* \mathbb{I}^\flat \xi_e - \ad_{\xi}^* \mathbb{I}^\flat \xi)
				\big).
				\hspace{-4pt}
			\end{aligned}
		\end{equation}%
		Finally, we define $f_g$ as in \eqref{mechanical_tracking_control} and use \eqref{evaluate_lie_group_covariant_term} 
		to compute 
		${\tau = \diff L_{g}^*(f_g)}$, simplifying using the identities
		${\diff L_{g}^* \circ \diff L_{g_d{}^{-1}}^* = \diff L_{e}^*}$ and ${\diff L_{e}^* \circ \nu^\flat = \nu^\flat \circ \diff L_{e}^{-1}}$ to yield \eqref{lie_group_tracking_control}. Thus, 
		the claim follows immediately by Theorem \ref{mechanical_tracking_control_theorem}.
	\end{proof}

\section{Applications of the Method}

\begin{example}[The Axisymmetric Satellite]
	\label{axisymmetric_satellite_example}
	Consider a free-floating satellite, modeled as an (underactuated) mechanical system on $(SO(3),L,\kappa_{\mathbb{J}})$ consisting of a rigid body with inertia tensor ${\mathbb{J} = \mathrm{diag}(J_{1},J_{2},J_{3})}$ with ${J_1 = J_2}$ and control torques lying in a two-dimensional left-invariant codistribution corresponding to ${\spn \{\hat{e}_1,\hat{e}_2\}\subset \mathfrak{so}(3)^*}$, 
	where ${\hat{\cdot} : \mathbb{R}^3 \to \mathfrak{so}(3)^* \cong \mathfrak{so}(3)}$ is the usual isomorphism satisfying $\hat{a} b = a \times b$. The satellite is equipped with a camera or antenna aligned with the $e_3$ axis, whose bearing is thus described by the output ${y : \dot{R} \mapsto R \, e_3}$.
	This system is invariant under the (left) action of $SO(2)$ on $SO(3)$ corresponding to a \textit{body-fixed} rotation around $e_3$. 
By Noether's theorem, the evolution around the symmetry axis is governed by conservation of momentum (\textit{i.e.} ${\frac{\diff}{\diff t}\big(J_3 \,  {\Omega}_3(t)\big) = 0}$). 
When ${\Omega_3(t_0) = 0}$, reduction by this symmetry (see \cite[Thm. 5.83]{BulloAndLewis2004}) will yield a fully-actuated mechanical system on ${(\mathbb{S}^2 = SO(3) / SO(2), \Psi, J_1 \, \rho)}$.
	It is clear that output tracking for the original (underactuated) system on $SO(3)$ amounts to asymptotically tracking a state trajectory for the reduced (fully-actuated) system on $\mathbb{S}^2$.
	We apply Corollary \ref{sphere_tracking_corollary} to synthesize the tracking controller, using the reductive decomposition ${\mathfrak{so}(3) = \mathfrak{f} \oplus \mathfrak{q}}$ with  
	 ${\mathfrak{f} = \spn\{\hat{e}_3 \} \cong \mathfrak{so}(2)}$ and ${\mathfrak{q} = \spn \{	\hat{e}_1,\hat{e}_2	\}}$ to compute 
	a horizontally lifted reference 
	in the manner of Proposition \ref{horizontal_lifts_reductive} (in fact, the principal connection is the \textit{``mechanical connection''} \cite[Sec. 3.10]{Bloch2003}), using local trivializations (see Remark \ref{computing_horizontal_lifts}) obtained from \cite[Eqn. (29)-(30)]{Welde2023}.   
	Fig. \ref{sphere_tracking_figure} shows the trajectory tracking performance.

\end{example}

\begin{example}[The Omnidirectional Aerial Robot \cite{Brescianini2016}]
	Consider an aerial robot consisting of a single rigid body and actuators capable of applying arbitrary wrenches. Compensating for gravity yields 
	a fully-actuated mechanical system on 
	$(\mathbb{R}^3 \times SO(3),L,\kappa_\mathbb{I})$,
	where 
	${\mathbb{I} = }$ ${\mathrm{diag}(m \mathds{1}_{3 \times 3}, \mathbb{J}_{3 \times 3} 
		)}$. We use Corollary \ref{group_tracking_corollary} with \eqref{se3_navigation} to track trajectories (see Fig. \ref{se3_tracking_figure}).
\end{example}

\begin{figure}[t]
	\begin{center}
		\includegraphics[width=.96\columnwidth,trim={1.5cm 8.6cm 2.2cm 6.5cm}, clip]{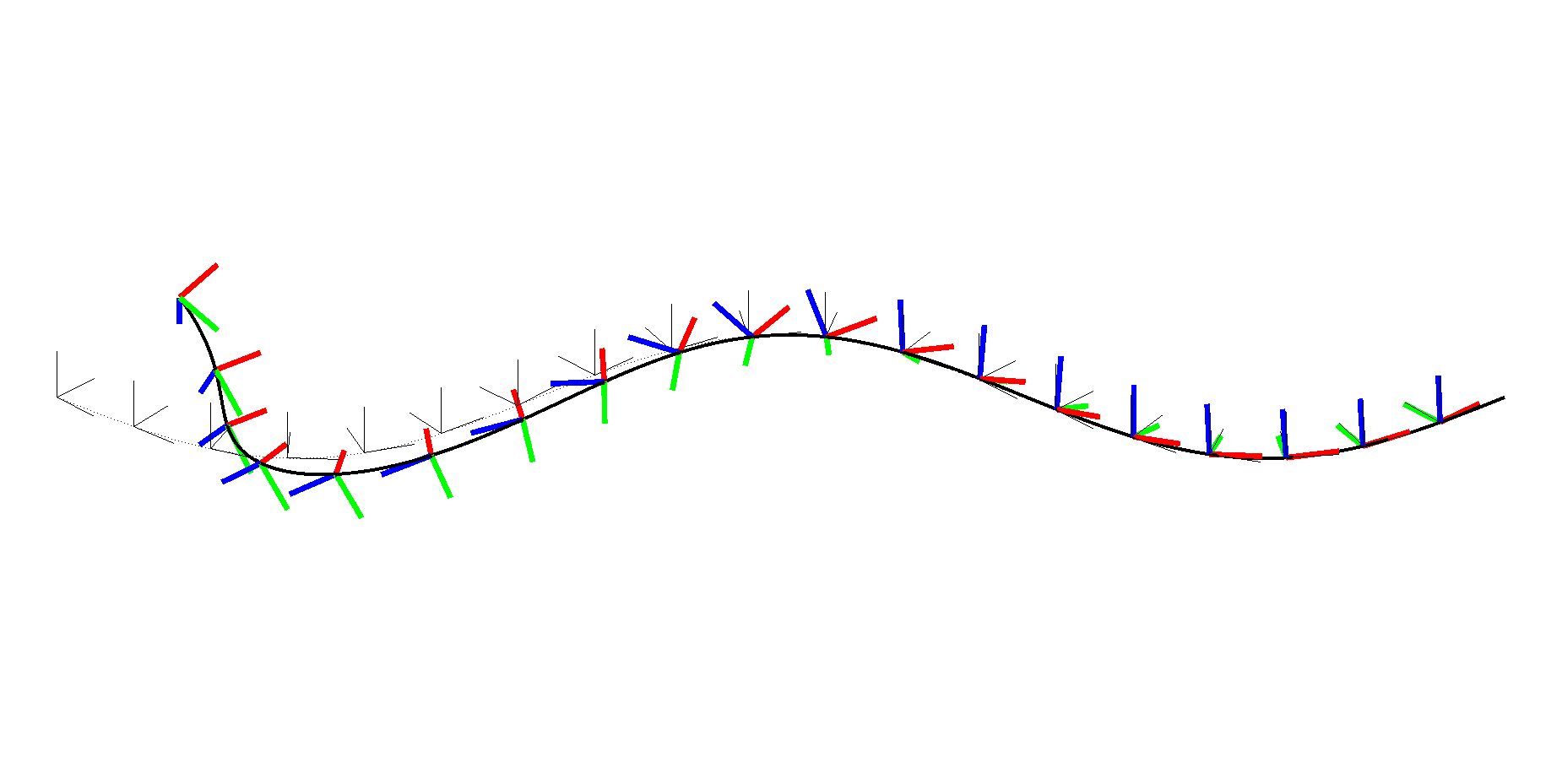}%
		\vspace{-10pt}
	\end{center}
	\caption{
		Almost global asymptotic trajectory tracking on ${\mathbb{R}^3 \times SO(3)}$.
	} 
	\vspace{-14pt}
	\label{se3_tracking_figure}
\end{figure}

\section{Discussion}
	Of the $n$-spheres, only $\mathbb{S}^1$ and $\mathbb{S}^3$ admit a Lie group structure, and thus methods like \cite{Maithripala2006} are not applicable to tracking on, \textit{e.g.}, $\mathbb{S}^2$. 
	As pointed out in \cite{Nayak2019}, methods which do not rely on reduction to regulation often fail to achieve or certify almost global convergence, in large part because they cannot benefit from \cite[Thm. 2]{Koditschek1989}.
	Other smooth tracking controllers on $\mathbb{S}^2$ (\textit{e.g.} 
	\cite{Bullo1999}
	or 
	\cite[Sec. III]{Lee2016}) do \textit{not} guarantee convergence from almost every initial state in $TQ$, rather ensuring convergence from a smaller region that, \textit{e.g.}, may contain the \textit{zero} tangent vector over almost every point in $Q$ ({not} $TQ$), sometimes leading to an abuse of the term ``almost global'' in regards to asymptotic stability.
 We also note that the differential properties of a state-valued tracking error are important, since it was the surjectivity of the partial derivative of the tracking error appearing in \eqref{covariant_acceleration_of_error} that enabled the feedforward cancellation of other terms and the injection of arbitrary dissipation and potential, regardless of the system considered (a difficulty of the approach proposed in \cite{Nayak2019}).

Example \ref{axisymmetric_satellite_example} demonstrates applicability to output tracking for certain underactuated systems. Moreover, hierarchical controllers for underactuated systems (\textit{e.g.}, \cite{Lee2010}) often rely on the control of subsystems that ``look'' fully-actuated, and the identification of a {``geometric flat output''} \cite{Welde2023} can often yield such a decomposition wherein the subsystems evolve on homogeneous spaces that may \textit{not} be Lie groups. Also, the almost global asymptotic stability of a hierarchical controller with subsystems of the form proposed in ${\textrm{Theorem \ref{mechanical_tracking_control_theorem}}}$ can be certified \textit{compositionally} using only the properties of the subsystems and the {``interconnection term''} \cite{Welde2023b}, thanks to the dissipative mechanical form of the error dynamics \eqref{mechanical_error_dynamics}.
 For these reasons, we believe the proposed method has substantial implications in the systematic synthesis of controllers 
 for a broad class of underactuated systems.

\section{Conclusion}

We propose a systematic, unified 
tracking controller for fully-actuated mechanical systems evolving on homogeneous Riemannian manifolds, with exactly zero probability of nonconvergence from a randomly selected initial state. 
We apply the method to systems with two different configuration manifolds.
Conceptually, our results illustrate that it is the \textit{transitivity} of a Lie group's action on the configuration manifold (and \textit{not} the absence of fixed points) 
that is useful for tracking control via error regulation in mechanical systems.

\bibliographystyle{IEEEtran}
\bibliography{IEEEabrv,refs}

\begin{thebibliography}{10}
\providecommand{\url}[1]{#1}
\csname url@rmstyle\endcsname
\providecommand{\newblock}{\relax}
\providecommand{\bibinfo}[2]{#2}
\providecommand\BIBentrySTDinterwordspacing{\spaceskip=0pt\relax}
\providecommand\BIBentryALTinterwordstretchfactor{4}
\providecommand\BIBentryALTinterwordspacing{\spaceskip=\fontdimen2\font plus
\BIBentryALTinterwordstretchfactor\fontdimen3\font minus
  \fontdimen4\font\relax}
\providecommand\BIBforeignlanguage[2]{{%
\expandafter\ifx\csname l@#1\endcsname\relax
\typeout{** WARNING: IEEEtran.bst: No hyphenation pattern has been}%
\typeout{** loaded for the language `#1'. Using the pattern for}%
\typeout{** the default language instead.}%
\else
\language=\csname l@#1\endcsname
\fi
#2}}

\bibitem{Nayak2019}
A.~Nayak and R.~N. Banavar, ``On almost-global tracking for a certain class of
  simple mechanical systems,'' \emph{IEEE Transactions on Automatic Control},
  vol.~64, no.~1, pp. 412--419, 2019.

\bibitem{Matni2024}
N.~Matni, A.~D. Ames, and J.~C. Doyle, ``Towards a theory of control
  architecture: A quantitative framework for layered multi-rate control,''
  \emph{arxiv:2401.15185}, 2024.

\bibitem{Koditschek1989}
D.~E. Koditschek, ``The application of total energy as a {L}yapunov function
  for mechanical control systems,'' \emph{Contemporary mathematics}, vol.~97,
  p. 131, 1989.

\bibitem{Maithripala2006}
D.~S. Maithripala, J.~M. Berg, and W.~P. Dayawansa, ``Almost-global tracking of
  simple mechanical systems on a general class of {L}ie groups,'' \emph{IEEE
  Transactions on Automatic Control}, vol.~51, no.~2, pp. 216--225, 2006.

\bibitem{Bernuau2013}
E.~Bernuau, W.~Perruquetti, and E.~Moulay, ``Retraction obstruction to
  time-varying stabilization,'' \emph{Automatica}, vol.~49, no.~6, pp.
  1941--1943, 2013.

\bibitem{Hampsey2022}
M.~Hampsey, P.~{van Goor}, and R.~Mahony, ``Tracking control on homogeneous
  spaces: the {E}quivariant {R}egulator {(EqR)},'' \emph{IFAC-PapersOnLine},
  vol.~56, no.~2, pp. 7462--7467, 2023.

\bibitem{RiemannianLee}
J.~M. Lee, \emph{Introduction to {R}iemannian manifolds}, ser. Graduate Texts
  in Mathematics.\hskip 1em plus 0.5em minus 0.4em\relax Springer, 2018, vol.
  176.

\bibitem{BulloAndLewis2004}
F.~Bullo and A.~D. Lewis, \emph{Geometric Control of Mechanical Systems}, ser.
  Texts in Applied Mathematics.\hskip 1em plus 0.5em minus 0.4em\relax Springer
  Verlag, 2004, vol.~49.

\bibitem{Cowan2007}
N.~J. Cowan, ``Navigation functions on cross product spaces,'' \emph{IEEE
  Transactions on Automatic Control}, vol.~52, no.~7, pp. 1297--1302, 2007.

\bibitem{Mahony2020}
R.~Mahony, T.~Hamel, and J.~Trumpf, ``Equivariant systems theory and observer
  design,'' \emph{arXiv:2006.08276}, 2020.

\bibitem{GallierI}
J.~Gallier and J.~Quaintance, \emph{Differential Geometry and Lie Groups: A
  Computational Perspective}, ser. Geometry and Computing.\hskip 1em plus 0.5em
  minus 0.4em\relax Springer, 2020, vol.~12.

\bibitem{Kowalski2001}
O.~Kowalski and J.~Szenthe, ``On the existence of homogeneous geodesics in
  homogeneous {R}iemannian manifolds,'' \emph{Geom. Dedicata}, vol.~81, no.
  1-3, pp. 209--214, 2000.

\bibitem{GallierII}
J.~Gallier and J.~Quaintance, \emph{Differential Geometry and Lie Groups, A
  Second Course}, ser. Geometry and Computing.\hskip 1em plus 0.5em minus
  0.4em\relax Springer, 2020, vol.~12.

\bibitem{Bloch2003}
A.~M. Bloch, \emph{Nonholonomic mechanics and control}, ser. Interdisciplinary
  Applied Mathematics.\hskip 1em plus 0.5em minus 0.4em\relax Springer-Verlag,
  2003, vol.~24.

\bibitem{EpsteinMarden2006}
D.~B.~A. Epstein and A.~Marden, ``Convex hulls in hyperbolic space, a theorem
  of {S}ullivan, and measured pleated surfaces,'' in \emph{Fundamentals of
  Hyperbolic Manifolds: Selected Expositions}, ser. London Math. Soc. Lecture
  Note Ser.\hskip 1em plus 0.5em minus 0.4em\relax Cambridge Univ. Press, 2006,
  vol. 328, pp. 117--266.

\bibitem{Welde2023}
J.~Welde, M.~D. Kvalheim, and V.~Kumar, ``The role of symmetry in constructing
  geometric flat outputs for free-flying robotic systems,'' in \emph{2023 IEEE
  International Conference on Robotics and Automation (ICRA)}, 2023, pp.
  12\,247--12\,253.

\bibitem{Brescianini2016}
D.~Brescianini and R.~D'Andrea, ``Design, modeling and control of an
  omni-directional aerial vehicle,'' in \emph{2016 IEEE International
  Conference on Robotics and Automation (ICRA)}, 2016, pp. 3261--3266.

\bibitem{Bullo1999}
F.~Bullo and R.~M. Murray, ``Tracking for fully actuated mechanical systems: a
  geometric framework,'' \emph{Automatica}, vol.~35, no.~1, pp. 17--34, 1999.

\bibitem{Lee2016}
T.~Lee, ``Optimal hybrid controls for global exponential tracking on the
  two-sphere,'' in \emph{2016 IEEE 55th Conference on Decision and Control
  (CDC)}, 2016, pp. 3331--3337.

\bibitem{Lee2010}
T.~Lee, M.~Leok, and N.~H. McClamroch, ``Geometric tracking control of a
  quadrotor {UAV} on {SE(3)},'' \emph{Proceedings of the IEEE Conference on
  Decision and Control}, pp. 5420--5425, 2010.

\bibitem{Welde2023b}
J.~Welde, M.~D. Kvalheim, and V.~Kumar, ``A compositional approach to
  certifying almost global asymptotic stability of cascade systems,''
  \emph{IEEE Control Systems Letters}, vol.~7, pp. 1969--1974, 2023.

\end{thebibliography}

\iffinal
\else

\begin{appendix}

To aid in comprehension, this appendix provides step-by-step explanations of certain technical arguments used above.
	
\subsection{Some Helpful Lemmas on Riemannian Geometry}

	\begin{lemma}
		\label{diagonal_curves_lemma}
		For a Riemmanian manifold $(Q,\kappa)$
		and a smooth ``family of curves'' ${\Gamma : \mathbb{R}^2 \to Q},$ ${(r,s) \mapsto \mathit{\Gamma}(r,s)}$, the ``diagonal'' curve given by ${\gamma : t \mapsto \mathit{\Gamma}(t,t)}$ satisfies the identity
		\begin{equation}
			\begin{aligned}
				\big(
				\nabla_{\dot{\gamma}} \dot{\gamma}
				\big)(t)
				=
				\big(
				\Diff_r \partial_r \mathit{\Gamma}
				+ 2 \Diff_r \partial_s \mathit{\Gamma}
				+ \Diff_s \partial_s \mathit{\Gamma}
				\big)(t,t).
				\label{diagonal_identity}
			\end{aligned}	
		\end{equation}
	\end{lemma}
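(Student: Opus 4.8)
The plan is to reduce \eqref{diagonal_identity} to an elementary computation in local coordinates, which is legitimate because covariant differentiation is a purely local operation \cite[Lemma 4.1]{RiemannianLee}: for a fixed $t_0 \in \mathbb{R}$, both sides of \eqref{diagonal_identity} at $(t_0,t_0)$ depend only on the restriction of $\mathit{\Gamma}$ to an arbitrarily small neighborhood of $(t_0,t_0)$ in $\mathbb{R}^2$. I would therefore choose a smooth chart on $Q$ around $\mathit{\Gamma}(t_0,t_0)$, write the components of the family as $\mathit{\Gamma} = (\mathit{\Gamma}^1,\dots,\mathit{\Gamma}^n)$, and let $\Gamma^k_{ij}$ denote the Christoffel symbols of the Levi-Civita connection $\nabla$ in this chart (to be kept carefully distinct from the family $\mathit{\Gamma}$). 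Since $\gamma(t) = \mathit{\Gamma}(t,t)$, the chain rule gives $\dot{\gamma}^k = \partial_r\mathit{\Gamma}^k + \partial_s\mathit{\Gamma}^k$, all partials understood to be evaluated along the diagonal $r=s=t$.

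First I would expand the left-hand side of \eqref{diagonal_identity} with the coordinate formula $(\nabla_{\dot\gamma}\dot\gamma)^k = \ddot{\gamma}^k + (\Gamma^k_{ij}\circ\gamma)\,\dot{\gamma}^i\dot{\gamma}^j$. Differentiating $\dot\gamma^k$ once more in $t$ and invoking equality of mixed partials yields $\ddot{\gamma}^k = \partial_r^2\mathit{\Gamma}^k + 2\,\partial_r\partial_s\mathit{\Gamma}^k + \partial_s^2\mathit{\Gamma}^k$; substituting $\dot\gamma^i = \partial_r\mathit{\Gamma}^i + \partial_s\mathit{\Gamma}^i$ and using $\Gamma^k_{ij} = \Gamma^k_{ji}$ gives $\Gamma^k_{ij}\dot\gamma^i\dot\gamma^j = \Gamma^k_{ij}\big(\partial_r\mathit{\Gamma}^i\partial_r\mathit{\Gamma}^j + 2\,\partial_r\mathit{\Gamma}^i\partial_s\mathit{\Gamma}^j + \partial_s\mathit{\Gamma}^i\partial_s\mathit{\Gamma}^j\big)$. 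Next, for the right-hand side, I would use the coordinate expression for the partial covariant derivative of a vector field $V$ over the family, namely $(\Diff_r V)^k = \partial_r V^k + (\Gamma^k_{ij}\circ\mathit{\Gamma})\,\partial_r\mathit{\Gamma}^i\,V^j$ together with the analogous formula for $\Diff_s$, applied with $V$ equal to $\partial_r\mathit{\Gamma}$, $\partial_s\mathit{\Gamma}$, and $\partial_s\mathit{\Gamma}$ respectively. This produces $(\Diff_r\partial_r\mathit{\Gamma})^k = \partial_r^2\mathit{\Gamma}^k + \Gamma^k_{ij}\partial_r\mathit{\Gamma}^i\partial_r\mathit{\Gamma}^j$, $(\Diff_r\partial_s\mathit{\Gamma})^k = \partial_r\partial_s\mathit{\Gamma}^k + \Gamma^k_{ij}\partial_r\mathit{\Gamma}^i\partial_s\mathit{\Gamma}^j$, and $(\Diff_s\partial_s\mathit{\Gamma})^k = \partial_s^2\mathit{\Gamma}^k + \Gamma^k_{ij}\partial_s\mathit{\Gamma}^i\partial_s\mathit{\Gamma}^j$; forming the combination $\Diff_r\partial_r\mathit{\Gamma} + 2\,\Diff_r\partial_s\mathit{\Gamma} + \Diff_s\partial_s\mathit{\Gamma}$ and evaluating at $(t,t)$ reproduces exactly the expansion of $(\nabla_{\dot\gamma}\dot\gamma)^k$ obtained above, establishing \eqref{diagonal_identity} componentwise, hence as an identity of vector fields along $\gamma$.

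I do not expect a genuine obstacle here; the content is purely bookkeeping. The points that warrant care are (i) the appeal to locality of $\nabla$, which makes the chart computation valid and ensures the result is chart-independent, (ii) tracking which partial derivatives are evaluated on the diagonal versus at general $(r,s)$, and (iii) the coincidental clash between the Christoffel symbols $\Gamma^k_{ij}$ and the family of curves $\mathit{\Gamma}$, which I would flag explicitly. As a remark, the factor of $2$ in \eqref{diagonal_identity} can equivalently be understood through the symmetry lemma $\Diff_r\partial_s\mathit{\Gamma} = \Diff_s\partial_r\mathit{\Gamma}$ (a consequence of $\nabla$ being torsion-free), combined with bilinearity of the pullback connection in the two directional contributions to $\dot\gamma$, though this identity is not needed for the coordinate argument above.
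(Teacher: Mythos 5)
Your proposal is correct and follows essentially the same route as the paper's own proof: both reduce to a local-coordinate computation (justified by locality of $\nabla$), expand $\nabla_{\dot\gamma}\dot\gamma$ via the coordinate formula, expand the three partial covariant derivatives similarly, and match terms using equality of mixed partials together with symmetry of the Christoffel symbols. Your closing remark about the symmetry lemma $\Diff_r\partial_s\mathit{\Gamma} = \Diff_s\partial_r\mathit{\Gamma}$ also parallels the paper's own note following the lemma.
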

	
	\begin{proof}
		Since this is a purely local question, we are free to work in local coordinates $(q^i)$ around any point ${\gamma(t) = \mathit{\Gamma}(r,s) \in S}$. Expressing the components of $\mathit{\Gamma}$ as $\mathit{\Gamma}(r,t) = \big(q^1(r,s),\ldots,q^n(r,s)\big)$, the components of $\gamma$ are given by $\gamma(t) = \big(q^1(t,t),\ldots,q^n(t,t)\big)$, so that
		\begin{equation}
			\dot{\gamma}(t) = \left(\tfrac{\partial q^k}{\partial r} + \tfrac{\partial q^k}{\partial s} \right) \partial_k,
		\end{equation}
		where the previous partial derivatives (and all subsequent ones) are evaluated at ${(r,s) = (t,t)}$.
		Then, by the coordinate formula for the covariant derivative along a curve, we have
		\begin{equation}
			\begin{aligned}
				\big(
				\nabla_{\dot{\gamma}} \dot{\gamma}
				\big)(t) &= 
				\Big(
				\tfrac{\partial^2 q^k}{\partial r^2} + 
				\tfrac{\partial^2 q^k}{\partial r \partial s} + 
				\tfrac{\partial^2 q^k}{\partial s \partial r} + 
				\tfrac{\partial^2 q^k}{\partial s^2}
				\\ & \hphantom{=} 
				\quad \quad \quad
				\big(\tfrac{\partial q^i}{\partial r} + \tfrac{\partial q^i}{\partial s} \big)
				\big(\tfrac{\partial q^j}{\partial r} + \tfrac{\partial q^j}{\partial s} \big)
				\Gamma^{k}_{ij}
				\Big)\partial_k,
				\label{diagonal_lemma_direct}
			\end{aligned}
		\end{equation}
		where $\Gamma^{k}_{ij}$ are the ``Christoffel symbols'' for $\nabla$.
		Meanwhile, 
		\begin{align}
			\partial_r \mathit{\Gamma} = \tfrac{\partial q^k}{\partial r} \partial_k, \quad \quad 
			\partial_s \mathit{\Gamma} = \tfrac{\partial q^k}{\partial s} \partial_k,
		\end{align}
		and using the same coordinate formula, we obtain
		\begin{align}
			\label{diagonal_lemma_r_r}
			(\Diff_r \partial_r \mathit{\Gamma})(t,t)
			&= 
			\Big(
			\tfrac{\partial^2 q^k}{\partial r^2} +
			\tfrac{\partial q^i}{\partial r} \tfrac{\partial q^j}{\partial r} 
			\Gamma^k_{ij}
			\Big) \partial_k,
			\\
			(\Diff_r \partial_s \mathit{\Gamma})(t,t)
			&=
			\Big(
			\tfrac{\partial^2 q^k}{\partial s \partial r} +
			\tfrac{\partial q^i}{\partial s} \tfrac{\partial q^j}{\partial r} 
			\Gamma^k_{ij}
			\Big) \partial_k,
			\label{diagonal_lemma_t_r}
			\\
			\label{diagonal_lemma_t_t}
			(\Diff_s \partial_s \mathit{\Gamma})(t,t)
			&= 
			\Big(
			\tfrac{\partial^2 q^k}{\partial s^2} +
			\tfrac{\partial q^i}{\partial s} \tfrac{\partial q^j}{\partial s} 
			\Gamma^k_{ij}
			\Big) \partial_k.
		\end{align}
		To complete the argument, it suffices to substitute \eqref{diagonal_lemma_r_r}-\eqref{diagonal_lemma_t_t} into \eqref{diagonal_identity} and use the equality of mixed partials and the symmetry of the Christoffel symbols to obtain \eqref{diagonal_lemma_direct}.
	\end{proof}

Note that a similar argument (found in the proof of \cite[Lemma 6.2]{RiemannianLee}) will show that ${\Diff_r \partial_s \mathit{\Gamma} = \Diff_s \partial_r \mathit{\Gamma}}$.

	\begin{lemma}
		For a homogeneous Riemannian manifold $(Q,\Phi,\kappa)$, 
		the following diagram commutes for all $g \in G$:
		\tikzcdset{every label/.append style = {font = \small}}
		\[\begin{tikzcd}
			TQ & & TQ \\
			&&\\
			{T^*Q} & & {T^*Q}
			\arrow["{\diff\Phi_g}", curve={height=-8pt}, from=1-1, to=1-3]
			\arrow["{\diff\Phi_{g^{-1}}}", curve={height=-8pt}, from=1-3, to=1-1, outer sep=2]
			\arrow["{\diff\Phi^*_{g^{-1}}}"', curve={height=8pt}, from=3-1, to=3-3, outer sep=2]
			\arrow["{\diff\Phi^*_{g}}",  curve={height=-8pt}, from=3-1, to=3-3,tail reversed, no head, ]
			\arrow["{\kappa^\sharp}", curve={height=-8pt}, from=3-1, to=1-1]
			\arrow["{\kappa^\flat}", curve={height=-8pt}, from=1-3, to=3-3]
			\arrow["{\kappa^\flat}"', curve={height=8pt}, tail reversed, no head, from=3-1, to=1-1]
			\arrow["{\kappa^\sharp}", curve={height=-8pt}, from=3-3, to=1-3]
		\end{tikzcd}\]
	\end{lemma}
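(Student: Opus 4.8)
The plan is to establish commutativity of the single square $\kappa^\flat \circ \diff\Phi_g = \diff\Phi^*_{g^{-1}} \circ \kappa^\flat$ (as maps $TQ \to T^*Q$) and then obtain every other square by inverting maps. Indeed, $\diff\Phi_g$ is invertible with inverse $\diff\Phi_{g^{-1}}$ (by functoriality of $\diff$ and $\Phi_g \circ \Phi_{g^{-1}} = \id_Q$), $\kappa^\flat$ is invertible with inverse $\kappa^\sharp$ (nondegeneracy of the metric), and $\diff\Phi^*_g$ is invertible with inverse $\diff\Phi^*_{g^{-1}}$ (dualize the previous relation, noting that dualization reverses the order of composition); composing the chosen square on the left and on the right with these inverses produces the remaining squares. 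In particular, the square with $g$ replaced by $g^{-1}$ reads $\kappa^\flat \circ \diff\Phi_{g^{-1}} = \diff\Phi^*_g \circ \kappa^\flat$, which is exactly the relation $(\diff\Phi_g \circ \kappa^\sharp)^{-1} = \diff\Phi^*_g \circ \kappa^\flat$ invoked in the proof of Theorem~\ref{mechanical_tracking_control_theorem}.

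To prove the chosen square, I would fix $g \in G$, a point $q \in Q$, and a vector $v_q \in T_qQ$, and first check that the two covectors $\kappa^\flat(\diff\Phi_g(v_q))$ and $\diff\Phi^*_{g^{-1}}(\kappa^\flat(v_q))$ are based at the same point $\Phi(g,q)$: the former by definition, and the latter because $\diff\Phi_{g^{-1}}$ carries $T_{\Phi(g,q)}Q$ into $T_qQ$ (since $\Phi_{g^{-1}} \circ \Phi_g = \id_Q$), so its dual $\diff\Phi^*_{g^{-1}}$ carries $T^*_qQ$ into $T^*_{\Phi(g,q)}Q$.

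It then suffices to pair both covectors with an arbitrary test vector $u \in T_{\Phi(g,q)}Q$. By the definition of $\kappa^\flat$, the left side gives $\pairing{\kappa^\flat(\diff\Phi_g(v_q))}{u} = \kappa\big(\diff\Phi_g(v_q), u\big)$. By the defining property of the dual of a derivative, applied to $f = \Phi_{g^{-1}}$ and then the definition of $\kappa^\flat$, the right side gives $\pairing{\diff\Phi^*_{g^{-1}}(\kappa^\flat(v_q))}{u} = \pairing{\kappa^\flat(v_q)}{\diff\Phi_{g^{-1}}(u)} = \kappa\big(v_q, \diff\Phi_{g^{-1}}(u)\big)$. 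Applying the $\Phi$-invariance of $\kappa$ to the pair $\big(v_q, \diff\Phi_{g^{-1}}(u)\big)$ via the diffeomorphism $\Phi_g$, and then using $\diff\Phi_g \circ \diff\Phi_{g^{-1}} = \diff(\Phi_g \circ \Phi_{g^{-1}}) = \id$, yields $\kappa\big(v_q, \diff\Phi_{g^{-1}}(u)\big) = \kappa\big(\diff\Phi_g(v_q), u\big)$; hence the two pairings coincide for every $u$, so the covectors are equal, which is the claimed square.

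I do not anticipate a genuine obstacle. The only subtlety is bookkeeping of base points — in particular, remembering that the dual map $\diff\Phi^*_{g^{-1}}$ acts fibrewise and sends $T^*_qQ$ to $T^*_{\Phi(g,q)}Q$ rather than the reverse — together with the fact that dualization reverses the order of composition, which must be tracked carefully when deducing the remaining three squares from the chosen one.
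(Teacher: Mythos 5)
Your proof is correct and follows essentially the same strategy as the paper's: establish one square by pairing both sides with a test vector, unwinding the definitions of $\kappa^\flat$ and the dual map, and invoking $\Phi$-invariance, then obtain the remaining squares by inverting maps. The only cosmetic difference is which square you verify directly — you prove $\kappa^\flat \circ \diff\Phi_g = \diff\Phi^*_{g^{-1}} \circ \kappa^\flat$ starting from a vector $v_q \in TQ$, whereas the paper proves $\kappa^\flat \circ \diff\Phi_g \circ \kappa^\sharp = \diff\Phi^*_{g^{-1}}$ starting from a covector $f_q \in T^*Q$; the two are equivalent under composition with $\kappa^\sharp$.
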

	\begin{proof}
		Letting $p = \Phi_g(q)$ for some ${p, q \in Q}$, 
		then for any ${v_{p} \in T_{p}Q}$ and ${f_q \in T_q^*Q}$, we compute
		\begin{align}
			\pairing{\kappa^\flat \circ \diff \Phi_g \circ \kappa^\sharp(f_q)}{v_{p}} 
			&= 	
			\kappa\left(\diff \Phi_g \circ \kappa^\sharp(f_q),v_{p}\right)
			\\ &= \kappa\left(\kappa^\sharp(f_q), \diff \Phi_{g^{-1}} (v_{p})\right)
			\\ &= \pairing{f_q}{\diff\Phi_{g^{-1}}(v_{p})}
			\\ &= \pairing{\diff\Phi^*_{g^{-1}}(f_q)}{v_{p}}.
		\end{align}
		To complete the proof, it
		suffices to note that ${(\kappa^\sharp)^{-1} = \kappa^\flat}$, ${(\diff \Phi_g)^{-1} = \diff \Phi_{g^{-1}}}$, and ${(\diff \Phi^*_g)^{-1} = \diff \Phi^*_{g^{-1}}}$.
	\end{proof}

	\subsection{Detailed Account of Convergence in Proof of Theorem 1}
	
	This section presents a more detailed account of an argument in the end of the proof of Theorem \ref{mechanical_tracking_control_theorem}, namely that ${\dot{e}(t) \to 0_{TQ}}$ as ${t \to \infty}$ implies that ${\dist_{TQ}\big(
		\dot{q}(t),
		\dot{q}_d(t)
		\big) \to 0}$ as ${t \to \infty}$. First, we note that 
	${\dot{e}(t) \to 0_{TQ}}$ implies in particular that ${{e}(t) \to 0_{Q}}$ and thus ${\dist_Q\big({e}(t), 0_{Q}\big) \to 0}$. Since $\dist_Q$ is $\Phi$-invariant, in view of \eqref{q_and_q_d_computation} this implies that ${\dist_Q\big(q(t), q_d(t)\big) \to 0}$.  For sufficiently large $t$, it then follows from \cite[II.A.2]{EpsteinMarden2006} that we have
	\begin{equation}
		\begin{aligned}
			\dist_{TQ}\big(
			\dot{q}(t),
			\dot{q}_d(t)
			\big) 
			&\leq \\
			\dist_Q\big(q(t),&q_d(t)\big)
			+ 
			\big|\big|\tau_{\gamma_t}\big(\dot{q}(t)\big) - \dot{q}_d(t)\big|\big|_\kappa,
			\label{sasaki_inequality}
		\end{aligned}
	\end{equation}
	where ${\gamma_t : [0,1] \to Q}$ is the shortest geodesic from ${q}(t)$ to ${q}_d(t)$ and ${\tau_{\gamma_t} : T_{\gamma_t(0)}Q \to T_{\gamma_t(1)}Q}$ is parallel transport along $\gamma_t$. Then, \eqref{sasaki_inequality}, \eqref{q_dot_computation}-\eqref{q_d_dot_computation}, and the triangle inequality imply 
	\begin{align}
		\label{three_term_limit}
		&\lim_{t \to \infty}
		\dist_{TQ}\big(
		\dot{q}(t),
		\dot{q}_d(t)
		\big) \leq
		\\ &
		\nonumber
		\quad \quad \lim_{t \to \infty}
		\dist_{Q}\big(
		{e}(t),
		0_Q
		\big) +	\lim_{t \to \infty}
		\big|\big|\tau_{\gamma_t} \circ \diff \Phi_{g_d}\big(\dot{e}(t)\big)
		\big|\big|_\kappa
		\\&\quad \quad +
		\lim_{t \to \infty}
		\big|\big|
		(\tau_{\gamma_t} \circ
		\diff \Phi^{e(t)}
		- 
		\diff \Phi^{0_Q}) \big(\dot{g}_d(t)\big))
		\big|\big|_\kappa.
		\nonumber
	\end{align}
The first limit on the right-hand side has already been shown to be zero. Since parallel transport preserves Riemannian norms and $\kappa$ is $\Phi$-invariant, the second is zero as well.

We now consider the third limit. It follows from the properties of left actions (in particular, ${\Phi_g \circ \Phi _h = \Phi_{gh}}$) that ${\diff \Phi^q : TG \to TQ}$ is an equivariant map for each ${q \in Q}$ (\textit{i.e.}, ${\diff \Phi_g \circ \diff \Phi^q = \diff \Phi^q \circ \diff L_g}$). Moreover, the $\Phi$-invariance of $\kappa$ implies that ${\diff \Phi_g \circ \tau_\gamma = \tau_{(\Phi_g \circ\gamma)} \circ \diff \Phi_g}$.
Together, these facts enable us to reexpress the third limit to obtain
\begin{equation}
	\begin{aligned}
		\lim_{t \to \infty}
		&\dist_{TQ}\big(
		\dot{q}(t),
		\dot{q}_d(t)
		\big)
		\leq 
		\\& 
		\ \ \lim_{t \to \infty}
		\big|\big|
		(\tau_{\alpha_t} \circ
		\diff \Phi^{e(t)} 
		- 
		\diff \Phi^{0_Q})
		\big(\xi_d(t)\big)
		\big|\big|_\kappa,
	\end{aligned}
\end{equation}
where $\xi_d$ is the body velocity of $g_d$ and ${\alpha_t := \Phi_{g_d(t)^{-1}} \circ \gamma_t}$ can be recognized as the shortest geodesic from $e(t)$ to $0_Q$. It is now clear that for the linear map ${A_t : \mathfrak{g} \to T_{0_Q} Q, \, \xi \mapsto  \big(\tau_{\alpha_t} \circ
	\diff \Phi^{e(t)} 
	- 
	\diff \Phi^{0_Q}\big)
	(\xi)}$, we have 
\begin{align}
	&\lim_{t \to \infty}
	\dist_{TQ}\big(
	\dot{q}(t),
	\dot{q}_d(t)
	\big) \leq 
	\lim_{t \to \infty}
	\big|\big|A_t\big|\big|_{\mathrm{op}}
	\, \big|\big|\xi_d(t)\big|\big|_{\mathfrak{g}},
\end{align}
where ${||\cdot||_{\mathfrak{g}}}$ is \textit{any} norm on $\mathfrak{g}$ and ${||\cdot||_{\mathrm{op}}}$ is the operator norm induced on linear maps from $\mathfrak{g}$ to $T_{0_Q} Q$ by ${||\cdot||_{\mathfrak{g}}}$ and $\kappa$. Since $\xi_d$ is bounded for all time, uniform equivalence of norms implies the existence of a constant $C > 0$ such that 
\begin{align}
	&\lim_{t \to \infty}
	\dist_{TQ}\big(
	\dot{q}(t),
	\dot{q}_d(t)
	\big) \leq 
	C 	\lim_{t \to \infty}
	\big|\big|A_t\big|\big|_{\mathrm{op}}.
\end{align}
To complete the argument, it suffices to observe that ${\tau_{\alpha_t} \to \id}$ as ${e(t) \to 0_Q}$, and thus 
${\big|\big|A_t\big|\big|_{\mathrm{op}} \to 0}$ as ${t \to \infty}$.

\end{appendix}

\fi

\end{document}